\newcommand\version{April 08, 2022}
\newtheorem{theorem}{Theorem}[section]
\newtheorem{proposition}[theorem]{Proposition}
\newtheorem{corollary}[theorem]{Corollary}
\theoremstyle{definition}
\newtheorem{assumption}[theorem]{Assumption}
\theoremstyle{remark}
\newtheorem{remark}[theorem]{Remark}
\newtheorem{remarks}[theorem]{Remarks}
\numberwithin{equation}{section}
\newcommand{\cb}{\mathcal{B}}
\renewcommand{\epsilon}{\varepsilon}
\newcommand{\F}{\mathcal{F}}
\newcommand{\MV}{\mathrm{MV}}
\newcommand{\N}{\mathbb{N}}
\renewcommand{\phi}{\varphi}
\newcommand{\R}{\mathbb{R}}
\newcommand{\T}{\mathbb{T}}
\newcommand{\Z}{\mathbb{Z}}
\DeclareMathOperator{\supp}{supp}
\DeclareMathOperator{\sgn}{sgn}
\DeclareMathOperator{\tr}{Tr}
\def\bs{\mathbb{S}}
\def\ch{\mathcal{H}}
\def\co{\mathcal{O}}
\def\cs{\mathcal{S}}
\def\cv{\mathcal{V}}
\newcommand{\me}[1]{\mathrm{e}^{#1}}
\newcommand{\one}{\mathbf{1}}
\newcommand*{\rom}[1]{\expandafter\@slowromancap\romannumeral #1@}
\begin{document}

\title[Number and sums of eigenvalues --- \version]{On the number and sums of eigenvalues of Schr\"odinger-type operators with degenerate kinetic energy}

\author[J.-C. Cuenin]{Jean-Claude Cuenin}
\address[Jean-Claude Cuenin]{Department of Mathematical Sciences, Loughborough University, Loughborough, Leicestershire, LE11 3TU United Kingdom}
\email{J.Cuenin@lboro.ac.uk}

\author[K. Merz]{Konstantin Merz}
\address[Konstantin Merz]{Institut f\"ur Analysis und Algebra, Technische Universit\"at Braunschweig, Universit\"atsplatz 2, 38106 Braunschweig, Germany}
\email{k.merz@tu-bs.de}

\subjclass[2010]{58C40, 81Q10}
\keywords{Degenerate kinetic energy, Eigenvalue estimates, Eigenvalue sums}

\date{\version}

\begin{abstract}
  We estimate sums of functions of negative eigenvalues of Schr\"odinger-type
  operators whose kinetic energy vanishes on a codimension one submanifold.
  Our main technical tool is the Stein--Tomas theorem and some of its generalizations.
\end{abstract}

\dedicatory{Dedicated to the memory of Sergey N.~Naboko}
\maketitle

\section{Introduction}

For $d\geq1$ we consider Schr\"odinger-type operators of the form
\begin{align}
  \label{eq:defh}
  H = T(-i\nabla)-V \quad \text{in}\ L^2(X^d)\,,\quad X\in\{\R,\Z\}
\end{align}
where the kinetic energy $T(\xi)$ vanishes
on a codimension-one submanifold.
A prime example is $T=|\Delta+1|$, which naturally appears, e.g., in the
BCS theory of superconductivity and superfluidity, see, e.g.,
Frank, Hainzl, Naboko, and Seiringer \cite{Franketal2007T},
Hainzl, Hamza, Seiringer, and Solovej \cite{Hainzletal2008T},
and Hainzl and Seiringer \cite{HainzlSeiringer2008C,HainzlSeiringer2008S},
as well as the Hartree--Fock theory of the electron gas (jellium), see, e.g.,
Gontier, Hainzl, and Lewin \cite{Gontieretal2019}.
The potential $V$ is assumed to be real-valued and sufficiently regular,
so that $H$ can be realized as self-adjoint operator. 
In this note we are interested in estimates for sums of functions of negative
eigenvalues of $H$ when $V\in L^q$ for some $q<\infty$.
We now state our assumptions on $T$.

\begin{assumption}
  \label{assumt}
  Assume that $T(\xi)\geq0$ attains its minimum on a smooth compact codimension
  one submanifold $S = \{\xi\in\R^d:T(\xi)=0\}$.
  Assume that there exists an open, precompact neighborhood $\Omega\subseteq\R^d$
  of $S$ such that the following holds.
  \begin{enumerate}
  \item There exists $P\in C^\infty(\Omega)$ such that $T(\xi)=|P(\xi)|$.
    Let $\tau:=\max_{\xi\in\Omega}T(\xi)$.
    
  \item There exist $c_P>0$ such that $|\nabla P(\xi)|\geq c_P$ for all
    $\xi\in\Omega$. 

  \item There exist constants $C_1,C_2>0$ and $s\in(0,d)$ such
    that $T(\xi)\geq C_1|\xi|^s+C_2$ for $\xi\in\R^d\setminus\Omega$.
  \end{enumerate}
  For $t>0$, consider the level set
  $S_t:=\{\xi\in\R^d:\ |P(\xi)|=t\}$ which is a smooth compact codimension one
  submanifold embedded in $\R^d$ with corresponding surface measure $d\Sigma_{S_t}$.
  We set $d\sigma_{S_t}(\xi):=d\Sigma_{S_t}(\xi)/|\nabla P(\xi)|$
  and assume that
  \begin{enumerate}
  \item[(4)] there is $r>0$ such that
    $\sup_{t\in(0,\tau)}|(d\sigma_{S_t})^\vee(x)|\lesssim_\tau (1+|x|)^{-r}$,
    where
    $(d\sigma_{S_t})^\vee(x)=\int_{S_t}\me{2\pi ix\cdot\xi}\,d\sigma_{S_t}(\xi)$
    denotes the Fourier transform of $d\sigma_{S_t}$. 
  \end{enumerate}
\end{assumption}

Assumptions (1)-(3) also appear in the work of Hainzl and Seiringer
\cite{HainzlSeiringer2010}, where it is assumed that $V\in L^1\cap L^{d/s}$.
These assumptions imply that the quadratic form
$\langle u,(T(-i\nabla)-V)u\rangle$ is bounded from below, whenever
$u\in C_c^\infty(\R^d)$. 
The Friedrichs extension then provides us with a self-adjoint operator
$H=T(-i\nabla)-V$. Note that the constants $\tau,c_P,C_1,C_2$ in Assumption
\ref{assumt} are fixed $\co(1)$-quantities.

Assumption (4) is related to the curvature of $S_t$ and is crucial since it
allows us to consider $V\in L^q\cap L_{\rm loc}^{d/s}$ with $q>1$.
Littman \cite{Littman1963} showed that if
$S$ has $2r\in\{0,1,...,d-1\}$ non-vanishing principles curvatures, then
one has the decay $|(d\sigma_S)^\vee(x)|=\co(|x|^{-r})$.
In particular, Assumption (4) holds for $T=|\Delta+1|$ with $r=(d-1)/2$.
Note also that this assumption is always guaranteed in the nonzero
curvature case, whenever one has the decay
$(d\sigma_{S})^\vee(x) = \co(|x|^{-(d-1)/2})$ for $t=0$.
(See, e.g., \cite[Proposition~4.1]{CueninMerz2021}.)

\medskip
For $V\in L^q$ with $q\in[d/s,\infty)$ the essential spectrum
$\sigma_{\rm ess}(H)=[0,\infty)$ coincides with that of $T(-i\nabla)$.
The discrete spectrum of the operator $H_\lambda:=T(-i\nabla)-\lambda V$
for $0<\lambda\ll1$ has recently received considerable interest.
For $V\in L^1\cap L^{d/s}(\R^d)$ it has been shown, e.g., by
Frank, Hainzl, Naboko, and Seiringer \cite{Franketal2007T}
and Hainzl and Seiringer \cite{HainzlSeiringer2008C,HainzlSeiringer2010}
that for any eigenvalue $a_S^j>0$ of the operator
\begin{align}
  \label{eq:defvsintro}
  \begin{split}
    L^2(S,d\sigma_S) & \to L^2(S,d\sigma_S)\,,\\
    u & \mapsto \int_S \hat V(\xi-\eta)u(\eta)d\sigma_S(\eta)\,, \quad u \in L^2(S,d\sigma_S)\,,
  \end{split}
\end{align}
there is a corresponding eigenvalue $-e_j(\lambda)<0$ of $T-\lambda V$
which satisfies
\begin{align}
  \label{eq:weakcouplingintro}
  e_j(\lambda) = \exp\left(-\frac{1}{2\lambda a_S^j}(1+o(1))\right)
  \quad \text{as}\ \lambda \to 0\,.
\end{align}
Here, $\hat V(\xi)=\int_{\R^d}\me{-2\pi ix\cdot\xi}V(x)\,dx$
denotes the Fourier transform of $V$ in $\R^d$.
Recently, the authors \cite{CueninMerz2021} extended this result to a
substantially larger class of potentials, such as
$V\in L^q(\R^d)$ with $q\in[d/s,r+1]$, whenever $T(-i\nabla)$
satisfies also the curvature assumption (4) with $r+1\geq d/s$
in Assumption \ref{assumt}.
This is clearly the case for $T=|\Delta+1|$ with $r=(d-1)/2$.

On the other hand, Laptev, Safronov, and Weidl \cite{Laptevetal2002}
studied the asymptotic behavior of the eigenvalues $-e_j<0$ of $T-V$
as $j\to\infty$, when $V$ is of the form $V(x) = v(x)(1+|x|)^{-1-\epsilon}$,
where $v\in L^\infty(\R^d)$ satisfies $v(x)=w(x/|x|)(1+o(1))$ as
$|x|\to\infty$ with $w\in C^\infty(\bs^{d-1})$. Similarly as in
\eqref{eq:weakcouplingintro}, the eigenvalue asymptotics is determined by
that of the eigenvalues $a_S^j>0$ of the operator in \eqref{eq:defvsintro}.
Their main result \cite[Theorem 4.4]{Laptevetal2002} essentially relied
on an abstract theorem (Theorem 3.4 there) which connected the spectral
asymptotics of $H$ and \eqref{eq:defvsintro} with each other. In turn,
the limit $\lim_{j\to\infty}a_S^j$ is well understood thanks to the works
\cite{BirmanSolomjak1977E} of Birman and Solomjak on singular values
of (asymptotically) homogeneous pseudodifferential operators with symbol
$h_1(x)a(x,\xi)h_2(x)$. Here $h_1,h_2\in C_c^\infty$, and
$a(x,t\xi)=t^{-\beta}a(x,\xi)$ for all $|\xi|\geq1$ and $t>1$.
By a change of coordinates, the operator in \eqref{eq:defvsintro} can
be transformed into this operator modulo ``error operators'' which do
not change the leading order of the spectral asymptotics of
\eqref{eq:defvsintro}. We refer to Birman and Yafaev
\cite{BirmanYafaev1984} for a detailed exposition
and for the explicit expression for $\lim_{j\to\infty}a_S^j$.
For $V$ merely in $L^q(\R^d)$, the results of Birman and Solomjak
are not applicable.
It would be interesting to study the asymptotics $\lim_{j\to\infty}e_j$ in this case.

\medskip
The purpose of this note is to prove estimates for sums of functions $f(x)$
on $\R_+$ of the absolute values of the negative eigenvalues of $T-V$ when
$V\in L^q$.
For $f(x)=x^\gamma$ this will lead to modifications of the celebrated
Lieb--Thirring inequality
\cite{LiebThirring1975,LiebThirring1976,Lieb1976B}
\begin{align}
  \label{eq:classicltintro}
  \tr[(-\Delta-V)_-]^\gamma
  \leq c_{d,\gamma} \int_{\R^d} V(x)_+^{\gamma+\frac{d}{2}}\,dx
\end{align}
with $\gamma\geq1/2$ if $d=1$, $\gamma>0$ if $d=2$, and $\gamma\geq0$
if $d\geq 3$, and a constant $c_{d,\gamma}>0$ which is independent of $V$.
Here we denote the positive and negative parts of a real number or a
self-adjoint operator by $X_+:=\max\{X,0\}$ and $X_-:=\max\{-X,0\}$,
respectively.
We refer to Frank \cite{Frank2021} for a recent review of its history,
applications, and generalizations.
Observe that the right side of \eqref{eq:classicltintro} is homogeneous
in $V$.
Since the assumptions on $T(\xi)$, i.e., the constants $\tau,c_P,C_1,C_2$
appearing in Assumption \ref{assumt} are fixed $\co(1)$-quantities,
we do not expect scale-invariant inequalities.

Nevertheless, non-scale-invariant inequalities relating
sums of eigenvalues with $L^q$-norms of $V$, such as
Daubechies' inequality \cite{Daubechies1983}
\begin{align}
  \label{eq:daubechies}
  & \tr[(\sqrt{-\Delta+1}-1-V)_-]
    \leq c_d \int_{\R^d} \left(V(x)_+^{1+\frac{d}{2}} + V(x)_+^{1+d}\right)\,dx
\end{align}
for the pseudorelativistic Chandrasekhar operator,
are important in the analysis of many-particle quantum systems.
In fact, using the techniques of \cite{Lieb1976B}, Daubechies extended
\eqref{eq:daubechies} to a larger class of operators $T(-i\nabla)$.
However, these results are not applicable in the present situation,
since they require $T(\xi)$ to be a spherically symmetric and strictly
increasing function with $T(0)=0$. This condition is not satisfied by
the operators $T$ we consider here, such as $T=|\Delta+1|$.
Further examples of eigenvalue estimates involving a sum of two terms
were proved, e.g., by Lieb, Solovej, and Yngvason \cite{Liebetal1994S}
in the context of the Pauli operator and
by Exner and Weidl \cite{ExnerWeidl2001} in the context of Schr\"odinger
operators in wave guides $\omega\times\R$ with $\omega\subset\R^{d-1}$.
For two-term estimates for eigenvalue sums of Schr\"odinger operators on
metric trees, we refer to Frank and Kova\v{r}\'{\i}k \cite[Theorem~6.1]{FrankKovarik2013},
see also Ekholm, Frank, and Kova\v{r}\'{\i}k \cite{Ekholmetal2011},
Molchanov and Vainberg \cite{MolchanovVainberg2010},
and the references therein for further results.
Finally, we refer to Frank, Lewin, Lieb, and Seiringer \cite{Franketal2013}
for two-term estimates for eigenvalue sums of Schr\"odinger operators in
presence of a constant positive background density.

Besides sums of powers of eigenvalues, we also prove estimates for
sums of powers of logarithms (i.e., $f(x)=(\log(2+1/x))^{-\gamma}$) of
eigenvalues of $T-V$. This is natural, as \eqref{eq:weakcouplingintro}
indicates that the eigenvalues of $T-V$ cluster with an exponential rate
at zero.
In particular, the proofs of these results yield estimates on the
eigenvalues $e_j$ and show how fast they cluster at zero as
$j\to\infty$, see \eqref{eq:bounden2}.
However, we do not investigate the asymptotics for
$\lim_{j\to\infty}e_j$ here.
The idea of deriving estimates for logarithms of eigenvalues is not new
and has already been considered by Kova{\v{r}}\'\i k, Vugalter, and Weidl
\cite{Kovariketal2007} in the context of two-dimensional Schr\"odinger
operators $-\Delta-V$, whose eigenvalues also cluster exponentially fast
at the bottom of the essential spectrum, see Simon \cite{Simon1976}.

If $T$ degenerates sublinearly, we are able to prove
Cwikel--Lieb--Rosenbljum-type estimates
\cite{Cwikel1977,Lieb1976B,Rosenbljum1972} for the number of
negative eigenvalues. We illustrate this using $T=|\Delta+1|^{1/s}$
with $s>1$.

Finally, we generalize our results to lattice Schr\"odinger-type operators
on $\ell^2(\Z^d)$. Under the same curvature assumption we obtain
better estimates than in $L^2(\R^d)$ due to the absence of high energies.

\subsection*{Organization and notation}
In Section \ref{s:preliminaries} we collect facts about Schatten spaces and
Fourier restriction theory that are used in the subsequent sections.
In Section \ref{s:classiclt} we prove estimates for the number of
eigenvalues of $T-V$ in $L^2(\R^d)$ below a fixed threshold $-e<0$
(Theorem \ref{tracerestriction}).
Then we prove inequalities for sums of powers of eigenvalues 
(Theorem \ref{classiclt}),
and for sums of powers of logarithms of eigenvalues of $T-V$ (Theorem \ref{loglt}).
We conclude with a Cwikel--Lieb--Rosenbljum bound
for $|\Delta+1|^{1/\sigma}-V$ with $\sigma>1$ (Theorem \ref{clrbcs}).
In Section \ref{s:discretesetting} we consider the corresponding problems for
Schr\"odinger operators on $\ell^2(\Z^d)$.
We first recall two versions of a discrete Laplace operator and a modification
of the ``BCS operator'' $|\Delta+1|-V$ to $\ell^2(\Z^d)$.
In Section \ref{s:ltlattice} we prove estimates on the number of
negative eigenvalues of $T-V$ in $\ell^2(\Z^d)$ below a threshold
$-e<0$ (Theorem \ref{tracerestrictionlattice}), ordinary and
logarithmic Lieb--Thirring-type inequalities (Theorems
\ref{classicltlattice} and \ref{logltlattice}), and a
Cwikel--Lieb--Rosenbljum bound for powers of the modified BCS operator
in $\ell^2(\Z^d)$ (Theorem \ref{fracbcsclrlattice}).

We write $A\lesssim B$ for two non-negative quantities $A,B\geq0$ to
indicate that there is a constant $C>0$ such that $A\leq C B$.
If $C=C_\tau$ depends on a parameter $\tau$, we write $A\lesssim_\tau B$.
The dependence on fixed parameters like $d$ and $s$ is sometimes
omitted. Constants are allowed to change from line to line.
The notation $A\sim B$ means $A\lesssim B\lesssim A$.
All constants are denoted by $c$ or $C$ and are allowed to change
from line to line.
We abbreviate $A\wedge B:=\min\{A,B\}$ and $A\vee B:=\max\{A,B\}$.
The Heaviside function is denoted by $\theta(x)$. We use the
convention $\theta(0)=1$.
The indicator function and the Lebesgue measure of a set
$\Omega\subseteq\R^d$ are denoted by $\one_\Omega$ and $|\Omega|$,
respectively.
For $x\in\R^d$ we write $\langle x\rangle:=(2+x^2)^{1/2}$.

\section{Preliminaries}
\label{s:preliminaries}

\subsection{Trace ideals}
We collect some facts on trace ideals that are used in this note, 
see also, e.g., Birman--Solomjak \cite[Chapter~11]{BirmanSolomjak1987}
or Simon \cite{Simon2005}.

Let $(\cb,\|\cdot\|)$ denote the Banach space of all linear, bounded
operators on a Hilbert space $\ch$.
The $p$-th Schatten space of all compact operators $T\in\cs^\infty(\ch)$
whose singular values $\{s_n(T)\}_{n\in\N}$ (in non-increasing order,
appearing according to their multiplicities) satisfy
$\|T\|_{\cs^p(\ch)}^p:=\sum_{n\geq1}s_n(T)^p<\infty$ for $p>0$ is
denoted by $\cs^p(\ch)$.
We denote the $p$-th weak Schatten space over $\ch$ by
\begin{align}
  \label{eq:defweakschatten}
  \cs^{p,\infty}(\ch) := \{T\in\cs^\infty(\ch):\, \|T\|_{\cs^{p,\infty}(\ch)}^p := \sup_{\lambda>0}\lambda^p n(\lambda,T)<\infty\} \supseteq\cs^p(\ch)\,,
\end{align}
where
\begin{align}
  \label{eq:defnlambdat}
  n(\lambda,T) := \#\{n:\,s_n(T)>\lambda\}\,, \quad \lambda>0\,.
\end{align}
Note that
\begin{align}
  \|T\|_{\cs^{p,\infty}(\ch)} = \sup_m s_m(T)m^{\frac1p}\,,
\end{align}
which together with \eqref{eq:defweakschatten} implies in particular
\begin{align}
  \label{eq:evboundtraceideal}
  s_m(T) \leq \|T\|_{\cs^{p,\infty}(\ch)}\,m^{-\frac1p}
  \quad \text{and} \quad
  n(\lambda,T) \leq \|T\|_{\cs^{p,\infty}(\ch)}^p\lambda^{-p}\,.
\end{align}

If $T:\ch\to\ch'$ is a linear operator between two Hilbert spaces $\ch$
and $\ch'$ we denote its $p$-th Schatten norm by $\|T\|_{\cs^p(\ch,\ch')}$.
If $\ch=\ch'$, we either write
$\|T\|_{\cs^{p}(\ch)}$, $\|T\|_{\cs^{p}}$, or $\|T\|_{p}$, and abbreviate
$\cs^p(\ch)=\cs^p$.
Analogous notation is used for $\cs^{p,\infty}$.

\subsection{Fourier restriction and extension}

Let $X\in\{\R,\Z\}$, $\hat X=\R$ when $X=\R$,
and $\hat X=\T$ when $X=\Z$, where $\T^d:=(\R/\Z)^d$ denotes
the $d$-dimensional torus with Brillouin zone $[-1/2,1/2)^d$.
If $X=\Z$, then the $L^q(X^d)$-spaces are equipped with counting measure
so that $L^q(\Z^d)\equiv\ell^q(\Z^d)$ for any $q>0$.

Let $S$ be a smooth, compact codimension one submanifold embedded in
$\hat X^d$ with induced Lebesgue surface measure $d\Sigma_S$.
If $S$ is the level set of a smooth real-valued function
$P\in C^\infty(\hat X^d)$, i.e.,
$S=\{\xi\in\hat X^d:\,P(\xi)=0\}$, then the Leray measure
\cite{GelfandShilov1964} is $d\sigma_S(\xi)=|\nabla P(\xi)|^{-1}d\Sigma_S(\xi)$.
We introduce the Fourier restriction operator
\begin{align}
  F_{S}:\cs(X^d)\to L^2(S,d\sigma_S)\,,
  \qquad \phi\mapsto (F_S\phi)(\xi) = \widehat{\phi}(\xi)\big|_S
  = \int_{X^d}\me{-2\pi ix\cdot\xi}\phi(x)\,dx \big|_S
\end{align}
and its adjoint, the Fourier extension operator
\begin{align}
  F_S^*:L^2(S,d\sigma_S)\to\cs'(X^d)\,,
  \quad u \mapsto (F_S^* u)(x) = \int_S u(\xi) \me{2\pi ix\cdot\xi} \,d\sigma_S(\xi)\,.
\end{align}
Under the additional assumption that the Gaussian curvature of $S$ is
non-zero everywhere, the Stein--Tomas theorem
\cite{Tomas1975,Stein1986,Bourgain2003}
asserts that $F_S:L^p(X^d)\to L^2(S)$ is bounded for all
$p\in[1,2(d+1)/(d+3)]$.
Its proof relies on the bound
$|(d\sigma_S)^\vee(x)|\lesssim\langle x\rangle^{-\frac{d-1}{2}}$.
By duality, the Stein--Tomas theorem is equivalent to the operator norm
bound
$\|W_1F_S^*F_SW_2\|_{L^2(X^d)\to L^2(X^d)}\lesssim \|W_1\|_{L^{2q}(X^d)}\|W_2\|_{L^{2q}(X^d)}$
for all $W_1,W_2\in L^{2q}$, whenever $1/q=1/p-1/p'$ and $p\in[1,2(d+1)/(d+3)]$,
i.e., $q\in[1,(d+1)/2]$.
Frank and Sabin \cite[Theorem 2]{FrankSabin2017} upgraded this to
a Schatten norm estimate.
For smooth compact hypersurfaces $S\subseteq \hat X^d$ with everywhere
non-vanishing Gaussian curvature and
\begin{align}
  \label{eq:defsigmaq}
  \sigma(q) := \frac{(d-1)q}{d-q}\,, \qquad q\in[1,d)\,,
\end{align}
Frank and Sabin proved
\begin{align}
  \label{eq:tsschatten}
  \|W_1F_S^*F_SW_2\|_{\cs^{\sigma(q)}(L^2(X^d))}
  \lesssim_{d,S,q} \|W_1\|_{L^{2q}(X^d)}\|W_2\|_{L^{2q}(X^d)}\,,
  \quad q\in\left[1,\frac{d+1}{2}\right]\,.
\end{align}
Note that $\sigma(1)=1$, $\sigma((d+1)/2)=d+1$, and $\sigma(q)\geq q$.

\medskip
As discussed in the introduction, if
$S$ has $2r\in\{0,1,...,d-1\}$ non-vanishing principle curvatures, then
one has the weaker decay $|(d\sigma_S)^\vee(x)|\lesssim\langle x\rangle^{-r}$,
which, as Greenleaf \cite{Greenleaf1981} showed, implies that
$F_S:L^p(X^d)\to L^2(S)$ is bounded for all $p\in[1,(2+2r)/(2+r)]$.
For a given decay rate of $|(d\sigma_S)^\vee(x)|$ the first author proved the
following generalization of \eqref{eq:tsschatten}. 

\begin{proposition}[{\cite[Proposition~A.5]{Cuenin2019}}]
  \label{franksabingen}
  Let $S\subseteq \hat X^d$ be a smooth compact hypersurface with normalized
  defining function\footnote{This means that $S=\{P=0\}$ and $|\nabla P|=1$ on $S$.}
  $P:\hat X^d\to\R$ and Lebesgue surface measure
  $d\Sigma_S$ and Leray measure
  $d\sigma_S(\xi)=|\nabla P(\xi)|^{-1}d\Sigma_S(\xi)$. Assume that
  \begin{align}
    \sup_{x\in X^d} (1+|x|)^r |(d\sigma_S)^\vee(x)|<\infty
  \end{align}
  for some $r>0$. Let $1\leq q\leq1+r$ and define
  \begin{align}
    \label{eq:defsigmaqr}
    \sigma(q,r) :=
    \begin{cases}
      \frac{2(d-1-r)q}{d-q} & \quad \text{if}\ \frac{d}{d-r}\leq q\leq 1+r\,,\\
      \frac{2rq+}{2rq-d(q-1)} & \quad \text{if}\ 1\leq q<\frac{d}{d-r}
    \end{cases}\,.
  \end{align}
  Here, $2rq+$ means $2rq + \epsilon$ with $\epsilon>0$ arbitrarily
  small but fixed. Then for all $W_1,W_2\in L^{2q}(X^d)$, we have
  \begin{align}
    \label{eq:franksabingen}
    \|W_1 F_S^*F_S W_2\|_{\cs^{\sigma(q,r)}}
    \lesssim \|W_1\|_{L^{2q}(X^d)}\|W_2\|_{L^{2q}(X^d)}\,,
  \end{align}
  where the implicit constant is independent of $W_1,W_2$.
\end{proposition}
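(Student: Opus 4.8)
The plan is to realise $W_1F_S^*F_SW_2$ as the integral operator on $L^2(X^d)$ with kernel $W_1(x)\,(d\sigma_S)^\vee(x-y)\,W_2(y)$ — this makes sense even though $F_S^*F_S$ (convolution against $(d\sigma_S)^\vee$) is unbounded on $L^2$, because the weights localise it — and then to run a Stein--Tomas argument in Schatten norms, in the spirit of the proof of \eqref{eq:tsschatten} in \cite{FrankSabin2017}, but feeding in Greenleaf's restriction theorem $F_S\colon L^p\to L^2(S)$, $p\le(2+2r)/(2+r)$, in place of Stein--Tomas. The only facts about $S$ that enter are that $(d\sigma_S)^\vee$ is bounded ($S$ compact) and that $|(d\sigma_S)^\vee(z)|\lesssim(1+|z|)^{-r}$.

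First I would fix a bump $\phi$ equal to $1$ on the unit ball and decompose the kernel dyadically in $|x-y|$: with $\psi_0=\phi$ and $\psi_j(z)=\phi(2^{-j}z)-\phi(2^{-j+1}z)$ for $j\ge1$ (so $\sum_{j\ge0}\psi_j\equiv1$, $\supp\psi_j\subseteq\{|z|\sim2^j\}$), write $W_1F_S^*F_SW_2=\sum_{j\ge0}T_j$ with $T_j=W_1A_jW_2$, $A_j$ being convolution against $(d\sigma_S)^\vee\psi_j$. On $\supp\psi_j$ the kernel of $A_j$ is $O(2^{-jr})$, while its Fourier multiplier is $d\sigma_S$ smeared to width $2^{-j}$, of size $O(2^j)$ and supported in the $2^{-j}$-neighbourhood $S_{2^{-j}}$ of $S$. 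For each $T_j$ I would establish two bounds. (i) A Hilbert--Schmidt bound: since $\|T_j\|_{\cs^2}^2\lesssim 2^{-2jr}\int\!\!\int_{|x-y|\lesssim2^j}|W_1(x)|^2|W_2(y)|^2$, Hölder's inequality (valid for $q\le2$) gives $\|T_j\|_{\cs^2}\lesssim 2^{j(d/q'-r)}\|W_1\|_{L^{2q}}\|W_2\|_{L^{2q}}$. (ii) An operator-norm bound: writing $A_j$ as a difference of two operators of the form $\tilde E\tilde E^*$, with $\tilde E$ the Fourier extension from $S_{2^{-j}}$ against the square root of the smeared measure, rescaling this $2^{-j}$-slab to unit size, applying the $TT^*$/dual form of Greenleaf's theorem to the rescaled surface, and tracking the powers of $2^j$ from the rescaling and from the weight, one gets $\|T_j\|_{\ope}\lesssim 2^{j\beta(q,r)}\|W_1\|_{L^{2q}}\|W_2\|_{L^{2q}}$ with $\beta(q,r)<0$ for $q<1+r$; the same factorisation, estimated in $\cs^2$ via the fact that the smeared measure has $L^1$-norm $O(1)$, lets one also reach Schatten exponents below $2$. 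Interpolating (i) and (ii) in the Schatten scale ($[\cs^2,\cs^\infty]_\theta=\cs^{p_\theta}$, resp.\ $[\cs^1,\cs^\infty]_\theta$ after the $L^2(S_{2^{-j}})$-factorisation) yields, for each admissible $p\ge1$, $\|T_j\|_{\cs^p}\lesssim 2^{j\gamma_p(q,r)}\|W_1\|_{L^{2q}}\|W_2\|_{L^{2q}}$; summing the geometric series over $j$ converges exactly when $\gamma_p(q,r)<0$, and since $\sigma(q,r)\ge1$ the triangle inequality in $\cs^{\sigma(q,r)}$ then gives \eqref{eq:franksabingen}. Solving $\gamma_p(q,r)=0$ produces the exponent $\sigma(q,r)$: the two cases in \eqref{eq:defsigmaqr} are the two possibilities for which of the operator-norm or Hilbert--Schmidt bound is binding in the optimisation, and in the range $1\le q<d/(d-r)$ the critical series is only logarithmically divergent, so one sums at an exponent slightly below criticality — this is the origin of the loss $2rq+\epsilon$.

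The step I expect to be the main obstacle is (ii): extracting the sharp per-piece operator-norm (and sub-$2$ Schatten) estimate from Greenleaf's theorem. This requires a parabolic rescaling of the $2^{-j}$-neighbourhood of $S$ together with careful bookkeeping of the resulting powers of $2^j$ and of the degeneration of the smeared measure; it is here, not in the Hilbert--Schmidt bound, that the curvature/decay hypothesis is used, and it is also the step that dictates how one reaches Schatten exponents below $2$ and forces the $\epsilon$-loss. The remaining ingredients — Hölder's inequality, complex interpolation of Schatten classes, and the geometric summation — are routine.
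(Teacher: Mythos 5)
First, note that the paper does not actually prove this proposition: it is imported wholesale from \cite[Proposition~A.5]{Cuenin2019}, where it is established for the resolvent of $P(-i\nabla)$ and then transferred to $F_S^*F_S$ by taking the imaginary part of the boundary values of the resolvent (see Remark \ref{remfranksabingen}(3)). Your attempt is therefore a from-scratch reconstruction. Much of it is sound: the kernel realisation, the Hilbert--Schmidt computation $\|T_j\|_{\cs^2}\lesssim 2^{j(d/q'-r)}\|W_1\|_{2q}\|W_2\|_{2q}$ (via H\"older--Young, valid for $q\le 2$), the per-piece operator-norm bound $\|T_j\|_{\ope}\lesssim 2^{j(1-(1+r)/q)}\|W_1\|_{2q}\|W_2\|_{2q}$ (which by the way needs no rescaled Greenleaf theorem: it is just interpolation between the $L^1\to L^\infty$ kernel bound $2^{-jr}$ and the $L^2\to L^2$ multiplier bound $2^{j}$), and the correct identification of $q=d/(d-r)$ as the threshold where the Hilbert--Schmidt bound changes sign and of the logarithmic divergence as the source of the $\epsilon$ in the second regime.

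The gap is in the regime $d/(d-r)\le q\le 1+r$, where the proposition asserts the exponent $\sigma(q,r)=2(d-1-r)q/(d-q)$ \emph{without} an $\epsilon$-loss. (a) Interpolating your two per-piece bounds, both taken at weight exponent $L^{2q}$, and solving for zero decay gives the Schatten exponent $2(d-1-r)(q-1)/(1+r-q)$, which is strictly larger (i.e.\ weaker) than $\sigma(q,r)$ for $q>d/(d-r)$ and blows up as $q\to 1+r$. To reach $\sigma(q,r)$ one must let the weight exponent vary along the interpolation line: the correct endpoints are the Hilbert--Schmidt bound at $L^{2q_A}$ with $q_A=(d-q)/(d-1-r)\in[1,d/(d-r)]$ (where it decays geometrically) and the trivial multiplier bound $\|A_j\|_{\ope}\lesssim 2^j$ at $L^\infty$ weights. (b) Even with the right endpoints, the interpolated per-piece bound at the exponent $\sigma(q,r)$ has decay rate exactly $2^{j\cdot 0}$, so summing the dyadic series with the triangle inequality in $\cs^{\sigma(q,r)}$ diverges logarithmically; your method would only yield $\sigma(q,r)+\epsilon$ in this regime too, which is strictly weaker than the statement. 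The loss-free result requires Stein's complex interpolation applied to an analytic family (e.g.\ the analytic continuation $d\sigma_z$ of the surface measure, for which one endpoint has a bounded kernel and the other a bounded multiplier, with no dyadic decomposition), which is precisely the mechanism of \cite{FrankSabin2017} and \cite{Cuenin2019}. As written, your argument proves a genuinely weaker proposition in the first regime.
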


\begin{remarks}
  \label{remfranksabingen}
  (1) We have $\sigma(q,r)\geq q$ when $r\leq(d-1)/2$
  and $\sigma(q,(d-1)/2)=\sigma(q)$ with $\sigma(q)$
  as in \eqref{eq:defsigmaq}.
    
  (2) The estimates \eqref{eq:tsschatten} and \eqref{eq:franksabingen}
  were proved for $\R^d$, but their (Fourier-analytic) proofs readily generalize
  to $\Z^d$.
    
  (3) The estimate in \cite[Proposition~A.5]{Cuenin2019} involved the resolvent
  of $P(-i\nabla)$. As usual, this implies \eqref{eq:franksabingen} since the
  imaginary part of the limiting resolvent equals the spectral measure.

  (4) Littman's bound $|(d\sigma_S)^\vee(x)|\lesssim\langle x\rangle^{-r}$ is rarely
  optimal except when the surface is completely flat in the vanishing curvature
  direction.
  (For a more detailed discussion and references to generic results, see, e.g.,
  \cite{CueninSchippa2022} by Schippa and the first author.)
  
  (5) As is discussed, e.g., in Ikromov, Kempe, and M\"uller
  \cite{Ikromovetal2010,IkromovMuller2011,IkromovMuller2016}, sharp decay
  estimates do not always imply $L^2\to L^p$ Fourier restriction bounds with
  optimal $p$.
\end{remarks}

\section{Bounds on number and sums of functions of eigenvalues in $L^2(\R^d)$}
\label{s:classiclt}

Suppose that the kinetic energy $T(-i\nabla)$ satisfies (1)-(3) in Assumption
\ref{assumt}.
Let $-e_1\leq -e_2\leq\cdots<0$ denote the negative eigenvalues
of $H=T-V$ in non-decreasing order (counting multiplicities) and
\begin{align}
  N_e(V) := \sum_{e_j>e}1 
\end{align}
denote the number of negative eigenvalues of $H$ below $-e\leq0$.
Let
\begin{align}
  \label{eq:defbs}
  BS(e) := |V|^{\frac12}(T+e)^{-1}V^{\frac12}
  \quad \text{on}\ L^2(\R^d)
\end{align}
where $V^{1/2}(x):=|V(x)|^{1/2}\sgn(V(x))$ with $\sgn(V(x))=1$ if $V(x)=0$.
By the Birman--Schwinger principle (cf.~\cite[Proposition 7.2]{Simon2005},
\cite[p. 99]{Cwikel1977}, \cite[Proposition 6]{Lieb1980T}), one has
\begin{align}
  \label{eq:boundnumber}
  N_e(V) = n(1,BS(e))
  \leq \|BS(e)\|_{\cs^{m,\infty}}^m \leq \|BS(e)\|_{\cs^m}^m
  \quad \text{for all}\ m>0\,.
\end{align}
As a consequence of the variational principle, i.e.,
\begin{align}
  \label{eq:noose}
  N_e(V) \leq N_e(V_+)
  = N_{e/2}(V_+-e/2) \leq N_{e/2}((V_+-e/2)_+)\,,
\end{align}
one can estimate for any $\gamma>0$,
\begin{align}
  \label{eq:lt}
  \tr(T(-i\nabla)-V)_-^\gamma
  \leq \gamma\int_0^\infty e^{\gamma-1}N_{e/2}((V_+-e/2)_+)\,de\,.
\end{align}

\subsection{Number of eigenvalues below a threshold}
We first prove estimates for Schatten norms of $BS(e)$.

\begin{theorem}
  \label{tracerestriction}
  Let $e>0$ and suppose $T(\xi)$ satisfies (1)-(3) in Assumption \ref{assumt}.
  \begin{enumerate}
  \item Let $m>d/s$. If $V\in L^m(\R^d)$, then there exists a constant
    $c_S>0$ (which also depends on $d,s,m,\tau$) such that
    \begin{align}
      \label{eq:tracerestriction1}
      \begin{split}
        \|BS(e)\|_m^m
        & \leq c_S(e^{1-m}\theta(1-e)+e^{d/s-m}\theta(e-1))\|V\|_m^m\,.
      \end{split}
    \end{align}

  \item \label{tracerestriction2}
    Suppose $T$ also satisfies (4) in Assumption \ref{assumt} with $r>0$.
    Let $q\in[1,r+1]$ and $m=\sigma(q,r)$ be as in
    \eqref{eq:defsigmaqr}.
    Suppose additionally $m>d/s$ and let $V\in L^m\cap L^q(\R^d)$.
    Then there is a constant $c_S$ (which also depends on $d,s,m,\tau,q,r$)
    such that
    \begin{align}
      \label{eq:tracerestriction2}
      \begin{split}
        \|BS(e)\|_m^m
        &  \leq c_S\left[\|V\|_m^m + \log(2+1/e)^m\|V\|_{q}^m\right] \theta(1-e)\\
        & \quad + c_S\left[e^{d/s-m}\|V\|_m^m + e^{-m}\min\{\|V\|_m,\|V\|_{q}\}^m\right] \theta(e-1)\,.
      \end{split}
    \end{align}

  \item In addition to the assumptions in (\ref{tracerestriction2}),
    suppose $q>d/s$. Then
    \begin{align}
      \label{eq:tracerestriction3}
      \begin{split}
        \|BS(e)\|_m^m
        \leq c_S\|V\|_{q}^m\left[\log(2+\frac1e)^m \theta(1-e)
          + e^{\frac{md}{sq}-m} \theta(e-1)\right]\,.
      \end{split}
    \end{align}
    If $q=d/s$, then \eqref{eq:tracerestriction3} holds with $\|\cdot\|_{m}^m$
    on the left side replaced by $\|\cdot\|_{m,\infty}^m$.
  \end{enumerate}
\end{theorem}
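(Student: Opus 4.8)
The plan is to estimate $\|BS(e)\|_{\cs^m}$ directly, using three ingredients: the Kato--Seiler--Simon inequality $\|f(x)g(-i\nabla)\|_{\cs^p}\lesssim\|f\|_{L^p}\|g\|_{L^p}$ for $p\ge2$ (see \cite{Simon2005}), the restriction/Schatten bound of Proposition~\ref{franksabingen} applied \emph{uniformly} along the foliation $\{S_t\}$, and, for part~(3), a Cwikel--Lieb--Rosenbljum bound for the fractional resolvent $(-\Delta)^{-s/2}$. Throughout, $m=\sigma(q,r)\ge q\ge1$, so $\cs^m$ is a Banach space and $\cs^q\hookrightarrow\cs^m$ (for the weak spaces, $\cs^{q,\infty}\hookrightarrow\cs^{m,\infty}$).

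For part~(1) I would factor $BS(e)=\bigl(|V|^{1/2}(T+e)^{-1/2}\bigr)\bigl((T+e)^{-1/2}V^{1/2}\bigr)$, apply Hölder for Schatten norms, and estimate each factor by Kato--Seiler--Simon with $p=2m\ge2$, obtaining $\|BS(e)\|_m^m\lesssim\|V\|_m^m\int_{\R^d}(T(\xi)+e)^{-m}\,d\xi$. It then remains to bound this integral. On $\Omega$ one has $T=|P|$ with $|\nabla P|\ge c_P$, so the coarea formula and the uniform boundedness of the Leray measures of the level sets give $\int_\Omega(T+e)^{-m}\,d\xi\lesssim\int_0^\tau(t+e)^{-m}\,dt\lesssim e^{1-m}$ (using $m>d/s>1$); on $\R^d\setminus\Omega$ one uses $T\ge C_1|\xi|^s+C_2$ together with $sm>d$, which makes the integral finite for $e\le1$ and, after the scaling $\xi\mapsto e^{1/s}\eta$, $\lesssim e^{d/s-m}$ for $e\ge1$. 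Since $e^{1-m}\le e^{d/s-m}$ when $e\ge1$, collecting the two regions gives \eqref{eq:tracerestriction1}.

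For parts~(2) and~(3) I would split the resolvent by energy. Fix $t_0\in(0,C_2)$ small enough that $\{T<t_0\}\subseteq\Omega$, and write $(T+e)^{-1}=g_{\mathrm{low}}(-i\nabla)+g_{\mathrm{high}}(-i\nabla)$, $g_{\mathrm{low}}(\xi):=\mathbf 1_{\{T<t_0\}}(\xi)/(T(\xi)+e)$, hence $BS(e)=B_{\mathrm{low}}+B_{\mathrm{high}}$. On $\{T<t_0\}$ we have $T=|P|$, the level sets $S_t=\{|P|=t\}$ ($0<t<t_0$) are the closed hypersurfaces of Assumption~\ref{assumt}, and $d\sigma_{S_t}=|\nabla P|^{-1}d\Sigma_{S_t}$ is exactly the Leray measure of Assumption~(4); the coarea formula then yields the $\cs^m$-valued Bochner integral identity
\[
  B_{\mathrm{low}}=\int_0^{t_0}\frac{1}{t+e}\,|V|^{1/2}F_{S_t}^{*}F_{S_t}V^{1/2}\,dt .
\]
By Proposition~\ref{franksabingen}, \emph{uniformly} in $t$ thanks to the uniform decay in Assumption~(4), $\bigl\||V|^{1/2}F_{S_t}^{*}F_{S_t}V^{1/2}\bigr\|_{\cs^m}\lesssim\|V\|_q$, so the triangle inequality and $\int_0^{t_0}(t+e)^{-1}dt\lesssim\log(2+1/e)\,\theta(1-e)+e^{-1}\theta(e-1)$ give $\|B_{\mathrm{low}}\|_m\lesssim\|V\|_q\bigl(\log(2+1/e)\,\theta(1-e)+e^{-1}\theta(e-1)\bigr)$; alternatively, bounding $B_{\mathrm{low}}$ by Kato--Seiler--Simon as in part~(1) gives $\|B_{\mathrm{low}}\|_m^m\lesssim\|V\|_m^m(e^{1-m}\theta(1-e)+e^{-m}\theta(e-1))$, and keeping the smaller of the two bounds produces the $\log(2+1/e)^m\|V\|_q^m$ term for $e\le1$ and the $e^{-m}\min\{\|V\|_m,\|V\|_q\}^m$ term for $e\ge1$. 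For $B_{\mathrm{high}}$, $g_{\mathrm{high}}$ is supported in $\{T\ge t_0\}$: part~(2) follows from Kato--Seiler--Simon exactly as in part~(1), since $\int_{\{T\ge t_0\}}(T+e)^{-m}\,d\xi\lesssim1$ for $e\le1$ (finite because $T\ge t_0$ on $\Omega$ and $T\gtrsim|\xi|^s$ with $sm>d$ on $\Omega^c$) and $\lesssim e^{d/s-m}$ for $e\ge1$ (by the same scaling), yielding the remaining terms of \eqref{eq:tracerestriction2}. For part~(3), where $q>d/s$, one uses $0\le g_{\mathrm{high}}(\xi)\lesssim(|\xi|^s+e)^{-1}$, writes $B_{\mathrm{high}}=\bigl(|V|^{1/2}g_{\mathrm{high}}^{1/2}(-i\nabla)\bigr)\bigl(g_{\mathrm{high}}^{1/2}(-i\nabla)V^{1/2}\bigr)$, and applies Kato--Seiler--Simon in $\cs^{2q}$ (legitimate as $q\ge1$) together with $\cs^q\hookrightarrow\cs^m$ to get $\|B_{\mathrm{high}}\|_m\lesssim\|V\|_q\bigl(\int_{\R^d}g_{\mathrm{high}}^{\,q}\bigr)^{1/q}\lesssim\|V\|_q(\theta(1-e)+e^{d/(sq)-1}\theta(e-1))$, i.e.\ $\|B_{\mathrm{high}}\|_m^m\lesssim\|V\|_q^m(\theta(1-e)+e^{md/(sq)-m}\theta(e-1))$; combined with the low part (using $e^{-m}\le e^{md/(sq)-m}$ for $e\ge1$) this gives \eqref{eq:tracerestriction3}. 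When $q=d/s$ the $\cs^{2q}$ estimate fails only logarithmically, and one replaces it by the Cwikel--Lieb--Rosenbljum bound $\bigl\||V|^{1/2}\langle-i\nabla\rangle^{-s}|V|^{1/2}\bigr\|_{\cs^{d/s,\infty}}\le\bigl\||V|^{1/2}(-\Delta)^{-s/2}|V|^{1/2}\bigr\|_{\cs^{d/s,\infty}}\lesssim\|V\|_{d/s}$ (using $\langle\xi\rangle^{-s}\le|\xi|^{-s}$, monotonicity of singular values, and \cite{Cwikel1977}), together with $g_{\mathrm{high}}(\xi)\le|\xi|^{-s}$ and $\cs^{d/s,\infty}\hookrightarrow\cs^{m,\infty}$; this bounds $\|B_{\mathrm{high}}\|_{\cs^{m,\infty}}\lesssim\|V\|_{d/s}$ uniformly in $e$ and gives the final claim.

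The main obstacle — and the only place Assumption~(4) is used essentially — is establishing the \emph{uniformity in $t$} of the bound $\bigl\||V|^{1/2}F_{S_t}^{*}F_{S_t}V^{1/2}\bigr\|_{\cs^m}\lesssim\|V\|_q$: one must check that the implicit constant in Proposition~\ref{franksabingen} depends on $S_t$ only through the decay constant in $\sup_x(1+|x|)^r|(d\sigma_{S_t})^\vee(x)|$ (and on $d,r,q$), which is precisely what Assumption~(4) controls uniformly over $t\in(0,\tau)$. The rest is bookkeeping of the regimes $e\lessgtr1$ and $q=d/s$ versus $q>d/s$, plus verifying that the operator-valued integral defining $B_{\mathrm{low}}$ converges in $\cs^m$ and represents $|V|^{1/2}g_{\mathrm{low}}(-i\nabla)V^{1/2}$ — which follows from the uniform bound, weak continuity of the integrand in $t$, and $(t+e)^{-1}\in L^1(0,t_0)$.
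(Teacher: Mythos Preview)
Your proposal is correct and follows the same overall architecture as the paper's proof: split into low and high energies, treat the high-energy piece by the Kato--Seiler--Simon (respectively Cwikel) inequality, and treat the low-energy piece by foliating into level sets $S_t$ and invoking Proposition~\ref{franksabingen} uniformly in $t$ (which is exactly the role of Assumption~(4)). Parts~(2) and~(3) are essentially identical to the paper, modulo your use of a sharp cutoff $\mathbf 1_{\{T<t_0\}}$ in place of the paper's smooth $\chi(T/\tau)$.

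The one genuine difference is in part~(1). The paper handles the low-energy piece there via the Araki--Lieb--Thirring trace inequality $\|B^{1/2}AB^{1/2}\|_{\cs^m}^m\le\|B^{m/2}A^mB^{m/2}\|_{\cs^1}$, followed by the spectral theorem and the trivial $\cs^1$ bound $\||V|^{m/2}F_{S_t}^*F_{S_t}|V|^{m/2}\|_{\cs^1}\le\sigma_{S_t}(S_t)\|V\|_m^m$. Your route is more elementary: you factor $BS(e)=\bigl(|V|^{1/2}(T+e)^{-1/2}\bigr)\bigl((T+e)^{-1/2}V^{1/2}\bigr)$, apply Kato--Seiler--Simon directly in $\cs^{2m}$, and reduce to the scalar integral $\int_{\R^d}(T(\xi)+e)^{-m}\,d\xi$, which you then evaluate by the coarea formula on $\Omega$ and by the ellipticity bound on $\Omega^c$. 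This avoids Araki--Lieb--Thirring entirely and is arguably cleaner; the paper's version, on the other hand, has the slight conceptual advantage of making the role of the level-set foliation visible already in part~(1), parallel to parts~(2)--(3).
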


\begin{proof}
  We begin with the proof of \eqref{eq:tracerestriction1}.
  H\"older's inequality yields
  \begin{align}
    \label{eq:tomassteinschatten}
    \begin{split}
      \|W_1 F_{S}^*F_{S}W_2\|_{\cs^{1}(L^2(\R^d))}
      & \leq \|W_1F_{S}^*\|_{\cs^2(L^2(S,d\sigma_S),L^2(\R^d))}\|W_2F_{S}^*\|_{\cs^2(L^2(S,d\sigma_S),L^2(\R^d))}\\
      & = \|W_1\|_{L^{2}}\|W_2\|_{L^{2}}\sigma_S(S)\,.
    \end{split}
  \end{align}
  For $\tau>0$ as in Assumption \ref{assumt} we separate high
  and low energies using a bump function $\chi\in C_c^\infty(\R_+:[0,1])$ with
  $\supp\chi\subseteq[0,1]$.
  By the Kato--Seiler--Simon inequality \cite[Theorem 4.1]{Simon2005}
  (with $m>d/s$), we obtain
  \begin{align}
    \label{eq:highlow}
    \begin{split}
      & \|BS(e)\|_m^m\\
      & \quad \lesssim_m (\||V|^{\frac12}(T+e)^{-1}\chi(T/\tau)|V|^{\frac12}\|_m + \||V|^{\frac12}(T+e)^{-1}(1-\chi(T/\tau))|V|^{1/2}\|_m)^m\\
      & \quad \lesssim \||V|^{\frac12}(T+e)^{-1}\chi(T/\tau)|V|^{\frac12}\|_m^m + \|V\|_m^m\min\{1,e^{d/s-m}\}\,.
    \end{split}
  \end{align}
  To treat the low energy part we use the Lieb--Thirring trace inequality
  \cite[Theorem~9]{LiebThirring1976}
  \begin{align}
    \label{eq:lttrace}
    \|B^{1/2}AB^{1/2}\|_{\cs^m}^m \leq \|B^{m/2}A^m B^{m/2}\|_{\cs^1}\,, \quad m\geq1
  \end{align}
  for linear operators $A,B\geq0$ in a separable Hilbert space, the spectral theorem,
  and \eqref{eq:tomassteinschatten}. We obtain
  \begin{align}
    \label{eq:highlowaux}
    \begin{split}
      & \||V|^{\frac12}(T+e)^{-1}\chi(T/\tau)|V|^{\frac12}\|_m^m
      \leq \||V|^{\frac{m}{2}}(T+e)^{-m}\chi(T/\tau)^m |V|^{\frac{m}{2}}\|_{1}\\
      & \quad \leq \int_0^\tau dt\ \frac{\||V|^{m/2}F_{S_t}^*F_{S_t}|V|^{m/2}\|_{1}}{(t+e)^{m}}
      \leq \|V\|_{m}^m \int_0^\tau dt\ (t+e)^{-m} \sigma_{S_t}(S_t)\\
      & \quad \leq c_{S,\tau} \|V\|_{m}^m \int_0^\tau \frac{dt}{(t+e)^{m}}
      \leq c_{S,\tau,m} \|V\|_{m}^m \min\{e^{1-m},e^{-m}\}\,,
    \end{split}
  \end{align}
  where we used Assumption \ref{assumt} to estimate
  \begin{align}
    \label{eq:dependenceS}
    \sigma_{S_t}(S_t) \leq \sup_{t\in[0,\tau]}\sup_{\xi\in S_t}\frac{\Sigma_{S_t}(S_t)}{|\nabla P(\xi)|}
    \leq c_{S,\tau}\,.
  \end{align}
  Combining \eqref{eq:highlow} and \eqref{eq:highlowaux} proves
  \eqref{eq:tracerestriction1}.
  To prove \eqref{eq:tracerestriction2}, we proceed as in the proof of
  \eqref{eq:tracerestriction1} but estimate the low energies using
  the Stein--Tomas estimate for trace ideals \eqref{eq:franksabingen}
  instead.
  For $q\in[1,r+1]$, we obtain
  \begin{align}
    \label{eq:tstrace}
    \begin{split}
      \||V|^{\frac12}(T+e)^{-1}\chi(T/\tau)|V|^{\frac12}\|_{\sigma(q,r)}
      & \leq \int_0^\tau \frac{dt}{t+e} \||V|^{1/2}F_{S_t}^*F_{S_t}|V|^{1/2}\|_{\sigma(q,r)}\\
      & \leq c_{S} \min\{\log(1+\tau/e),\tau/e\} \|V\|_{q}\,.
    \end{split}
  \end{align}
  Setting $m=\sigma(q,r)$ on the left side of \eqref{eq:tstrace} and combining
  it with \eqref{eq:tracerestriction1} yields \eqref{eq:tracerestriction2}.

  The final estimate \eqref{eq:tracerestriction3} follows from the
  proof of \eqref{eq:tracerestriction2} by replacing the estimate for
  the high energies in the second and third line of \eqref{eq:highlow}
  by the following estimate,
  \begin{align*}
    \||V|^{\frac12}(T+e)^{-1}(1-\chi(T/\tau))|V|^{1/2}\|_m^m
    & \leq \||V|^{\frac12}(T+e)^{-1}(1-\chi(T/\tau))|V|^{1/2}\|_{q}^m\\
    & \lesssim \|V\|_{q}^m\min\{1,e^{\frac{md}{sq}-m}\}\,,
  \end{align*}
  where $m\geq q>d/s$. (Here we used the Kato--Seiler--Simon inequality
  again.) This concludes the proof of \eqref{eq:tracerestriction3}.
  For $q=d/s$ we use Cwikel's bound (see \cite[Theorem 4.2]{Simon2005}
  or \eqref{eq:cwikel} below), which is applicable since $q>1$ in this case.
\end{proof}

\begin{remarks}
  \label{jtheigenvalue}
  (1) The terms proportional to $\|V\|_m^m$ in \eqref{eq:tracerestriction2}
  and the term that scales like $e^{\frac{md}{sq}-m}$ in
  \eqref{eq:tracerestriction3} are due to high energies.
  
  (2) If $r=(d-1)/2$, $d>s\geq d/q$ and $0<e<1$,
  then \eqref{eq:tracerestriction3}
  implies for $q\in(1,(d+1)/2]$ and $m=\sigma(q)$,
  \begin{align}
    N_e(V) \leq c_{S} \log(1/e)^{\sigma(q)} \|V\|_{q}^{\sigma(q)}\,.
  \end{align}
  Thus, the $n$-th negative eigenvalue $-1<-e_n<0$ satisfies
  \begin{align}
    \label{eq:bounden2}
    e_n \leq \exp\left(-\frac{n^{1/\sigma(q)}}{c_{S}^{1/\sigma(q)}\,\|V\|_{q}}\right)\,.
  \end{align}
\end{remarks}

We close this subsection by proving a slight refinement of the bound for $N_e(V)$ that follows immediately from \eqref{eq:boundnumber} and \eqref{eq:tracerestriction1}. To that end we apply Fan's inequality \cite{Fan1951} (see also \cite[Theorem~1.7]{Simon2005}), which asserts
\begin{align}
  \label{eq:fan}
  s_{j+\ell+1}(A+B) \leq s_{j+1}(A) + s_{\ell+1}(B)
\end{align}
for all $j,\ell\in\N_0$ and all $A,B\in\cs^\infty$.

\begin{corollary}
  \label{triangle}
  Let $e,\tau>0$, $m_1,m_2\geq1$.
  Let $L_{\rm loc}^1(\R^d)\ni T(\xi)\geq0$ and $V\in L_{\rm loc}^1(\R^d)$ so that
  $BS_<(e):=|V|^{\frac12}(T+e)^{-1}\one_{\{T<\tau\}}V^{\frac12}$ and
  $BS_>(e):=|V|^{\frac12}(T+e)^{-1}\one_{\{T>\tau\}}V^{\frac12}$ are compact operators.
  Then
  \begin{align}
    \label{eq:triangle1}
    N_e(V) \leq 2\cdot[2^{m_1}\|BS_<(e)\|_{\cs^{m_1,\infty}}^{m_1} + 2^{m_2}\|BS_>(e)\|_{\cs^{m_2,\infty}}^{m_2}]\,.
  \end{align}
  In particular, for $T(\xi)$ satisfying (1)-(3) in Assumption \ref{assumt},
  $m_1>1$ and $m_2>d/s$, there is $c_S>0$ (which also depends on $d,s,m_1,m_2,\tau$)
  such that
  \begin{align}
    \label{eq:triangle2}
    N_e(V)
    \leq c_S [(e^{1-m_1}\|V\|_{m_1}^{m_1}+\|V\|_{m_2}^{m_2})\theta(1-e) + e^{d/s-m_2}\|V\|_{m_2}^{m_2}\theta(e-1)]\,.
  \end{align}
\end{corollary}

\begin{proof}
  We begin with proving \eqref{eq:triangle1}.
  By \eqref{eq:fan}, we have
  $s_{n+1}(BS(e))\leq s_{n/2+1}(BS_<(e))+s_{n/2+1}(BS_>(e))$ for even $n$ and
  $s_{n+1}(BS(e))\leq s_{(n+1)/2+1}(BS_<(e))+s_{(n-1)/2+1}(BS_>(e))$ for odd $n$.
  Thus,
  \begin{align}
    \begin{split}
      & \{n\in2\N_0:\, s_{n+1}(BS(e))>1\}\\
      & \quad \subseteq \{n\in2\N_0:\,s_{n/2+1}(BS_<(e))>1/2\} \cup \{n\in2\N_0:\,s_{n/2+1}(BS_>(e))>1/2\}
    \end{split}
  \end{align}
  and a similar statement holds for odd $n$. Combining this with
  \begin{align*}
    & \{n\in\N_0:\, s_{n+1}(BS(e))>1\}\\
    & \quad = \{n\in 2\N_0:\, s_{n+1}(BS(e))>1\} \cup \{n\in(2\N_0+1):\, s_{n+1}(BS(e))>1\}
  \end{align*}
  and \eqref{eq:boundnumber} yields \eqref{eq:triangle1}, because
  \begin{align}
    \begin{split}
      N_e(V) & = \# \{n\in\N:\, s_{n}(BS(e))>1\}\\
      & \leq 2\left(\#\{n\in \N_0:\,s_{n+1}(BS_<(e))>1/2\} + \# \{n\in\N_0:\,s_{n+1}(BS_>(e))>1/2\}\right)\\
      & \leq 2 \left( 2^{m_1}\|BS_<(e)\|_{\cs^{m_1,\infty}}^{m_1} + 2^{m_2}\|BS_<(e)\|_{\cs^{m_2,\infty}}^{m_2} \right)\,.
    \end{split}
  \end{align}
  To prove \eqref{eq:triangle2}, we first write
  \begin{align}
    \label{eq:triangleaux1}
    N_e(V) = n(1,BS(e))\theta(1-e) + n(1,BS(e))\theta(e-1)\,.
  \end{align}
  The second summand is estimated using the Kato--Seiler--Simon inequality by
  \begin{align}
    n(1,BS(e))\theta(e-1)
    \leq \|BS(e)\|_{m_2}^{m_2}\theta(e-1)
    \lesssim \|V\|_{m_2}^{m_2} \cdot e^{d/s-m_2}\theta(e-1)\,.
  \end{align}
  The first summand in \eqref{eq:triangleaux1} is estimated using
  \eqref{eq:triangle1} by
  \begin{align}
    \begin{split}
      n(1,BS(e))\theta(1-e)
      & \lesssim (\|BS_<(e)\|_{m_1}^{m_1} + \|BS_>(e)\|_{m_2}^{m_2})\theta(1-e)\\
      & \lesssim (e^{1-m_1}\|V\|_{m_1}^{m_1} + \|V\|_{m_2}^{m_2})\theta(1-e)\,,
    \end{split}
  \end{align}
  where we used the steps in the proof of \eqref{eq:tracerestriction1}.
\end{proof}

\subsection{Sums of powers of eigenvalues}

We now use \eqref{eq:boundnumber}-\eqref{eq:lt} and Theorem
\ref{tracerestriction} to obtain estimates for sums of powers of
eigenvalues of $T-V$.

\begin{theorem}
  \label{classiclt}
  Suppose $T(\xi)$ satisfies (1)-(3) in Assumption \ref{assumt}.
  \begin{enumerate}
  \item If $\gamma>0$ and
    $V\in L^{\gamma+1}\cap L^{\gamma+d/s}(\R^d)$ then there exists a constant
    $c_S>0$ (which also depends on $d,s,m,\gamma$) such that
    \begin{align}
      \label{eq:classiclt}
      \tr_{L^2(\R^d)}(T(-i\nabla)-V)_-^\gamma
      \leq c_S \int_{\R^d} (V_+(x)^{\gamma+1} + V_+(x)^{\gamma+d/s})\,dx\,.
    \end{align}

  \item Suppose $T$ also satisfies (4) in Assumption \ref{assumt} with $r>0$.
    Let $q\in[1,r+1]$ and $m=\sigma(q,r)$.
    Suppose additionally $m>d/s$.
    If $\gamma>m-d/s$ and $V\in L^q\cap L^{\gamma+d/s}(\R^d)$,
    then there is a constant $c_S$ (which also depends on
    $d,s,m,q,\gamma,r$) such that
    \begin{align}
      \label{eq:classiclt2}
      \tr_{L^2(\R^d)}(T(-i\nabla)-V)_-^\gamma
      \leq c_S (\|V_+\|_q^m + \|V_+\|_{\gamma+d/s}^{\gamma+d/s})\,.
    \end{align}
  \end{enumerate}
\end{theorem}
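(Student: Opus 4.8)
The plan is to follow the standard Birman--Schwinger/layer-cake route: feed the energy-dependent Schatten bounds of Theorem~\ref{tracerestriction} into the variational reduction \eqref{eq:noose}--\eqref{eq:lt},
\[
  \tr(T(-i\nabla)-V)_-^\gamma \le \gamma\int_0^\infty e^{\gamma-1}\, N_{e/2}\big((V_+-e/2)_+\big)\,de\,,
\]
combined with the Birman--Schwinger inequality $N_e(V)\le\|BS(e)\|_m^m$ from \eqref{eq:boundnumber}, applied with $V$ and $e$ replaced by $W_e:=(V_+-e/2)_+$ and $e/2$. Throughout, $W_e\le V_+$ pointwise, and for each fixed $e>0$ one has $W_e\in L^m$ (as one checks from $W_e\le V_+$, the fact that $\{V_+>e/2\}$ has finite measure, and the assumed integrability of $V$, which controls $V_+^m$ on $\{V_+>e\}$). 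Hence Theorem~\ref{tracerestriction} is applicable to $W_e$.

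For part (1) I fix any $m\in(d/s,\gamma+1)$, a non-empty range precisely because $\gamma>d/s-1$. Plugging \eqref{eq:tracerestriction1} (for $W_e$, at energy $e/2$) into the display above bounds $\tr(T-V)_-^\gamma$ by $c_S\int_0^\infty e^{\gamma-1}\big((e/2)^{1-m}\theta(2-e)+(e/2)^{d/s-m}\theta(e-2)\big)\|W_e\|_m^m\,de$. Since $\|W_e\|_m^m=\int_{\R^d}(V_+(x)-e/2)_+^m\,dx$ depends linearly on the $x$-integrand, Tonelli reduces matters to a one-dimensional $e$-integral, which the substitution $e=2V_+(x)u$ turns into Beta-type integrals: the low-energy part gives $\int_0^1 u^{\gamma-m}(1-u)^m\,du$, finite since $m<\gamma+1$, producing a multiple of $V_+(x)^{\gamma+1}$ (on $\{V_+>1\}$ one only gets $V_+(x)^m\le V_+(x)^{\gamma+1}$), while the high-energy part gives $\int_0^1 u^{\gamma-1+d/s-m}(1-u)^m\,du$, finite since $\gamma>m-d/s$ (as $m<\gamma+1<\gamma+d/s$, using $s<d$), producing a multiple of $V_+(x)^{\gamma+d/s}$. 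Integrating in $x$ yields \eqref{eq:classiclt}.

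Part (2) is the same scheme with $m=\sigma(q,r)$ and the sharper bound \eqref{eq:tracerestriction2} replacing \eqref{eq:tracerestriction1}. Every term carrying $\|W_e\|_m^m$ --- the low-energy $\|W_e\|_m^m$ and the high-energy $(e/2)^{d/s-m}\|W_e\|_m^m$ and $(e/2)^{-m}\min\{\|W_e\|_m,\|W_e\|_q\}^m\le(e/2)^{-m}\|W_e\|_m^m$ --- is handled exactly as in part (1) by Tonelli and the $e=2V_+(x)u$ substitution; each resulting Beta integral converges because $\gamma>0$ and $\gamma>m-d/s$, and after comparing powers of $V_+(x)$ on $\{V_+\le1\}$ versus $\{V_+>1\}$ (using $d/s<\gamma+d/s$ and $m<\gamma+d/s$) all of them are controlled by $\|V_+\|_{\gamma+d/s}^{\gamma+d/s}$. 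The genuinely new term is the low-energy $\log(2+2/e)^m\|W_e\|_q^m$: since $\|W_e\|_q^m$ is a nonlinear function of $W_e$, Tonelli is not available, so instead I bound $\|W_e\|_q\le\|V_+\|_q$, pull $\|V_+\|_q^m$ out of the integral, and use $\int_0^2 e^{\gamma-1}\log(2+2/e)^m\,de<\infty$ (valid since $\gamma>0$). This produces the $\|V_+\|_q^m$ summand and completes \eqref{eq:classiclt2}. The only delicate point in the whole argument is the exponent bookkeeping in this non-homogeneous estimate: for each energy regime and each term one must decide whether to measure $V_+$ in $L^{\gamma+1}$, in $L^{\gamma+d/s}$, or (via $\|\cdot\|_q$) as $\|V_+\|_q^m$, and check that the convergence conditions for the $e$-integrals are exactly the hypotheses $\gamma>d/s-1$ and $\gamma>m-d/s$. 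The logarithmic term is the sole place where the linear Tonelli trick fails, but monotonicity disposes of it precisely because $\gamma>0$ keeps the leftover $e$-integral finite.
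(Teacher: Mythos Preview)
Your proof follows the same strategy as the paper's: feed Theorem~\ref{tracerestriction} into \eqref{eq:boundnumber}--\eqref{eq:lt}, then Tonelli and a Beta-type substitution for the terms that are local in $V$, and the crude monotone bound $\|W_e\|_q\le\|V_+\|_q$ for the logarithmic term. There is one slip in part~(2): the high-energy contribution $(e/2)^{-m}\|W_e\|_m^m$ leads, after your substitution $e=2V_+(x)u$, to $\int_{1/V_+(x)}^1 u^{\gamma-1-m}(1-u)^m\,du$, and extending this to a full Beta integral $\int_0^1$ diverges whenever $\gamma\le m$; the hypotheses only give $\gamma>m-d/s$, not $\gamma>m$, so your claim that ``each resulting Beta integral converges because $\gamma>0$ and $\gamma>m-d/s$'' is false for this term. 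The fix is immediate: since this term is present only for $e\ge2$, use $(e/2)^{-m}\le(e/2)^{d/s-m}$ there and absorb it into the preceding high-energy term. This is exactly what the paper does---it collapses all $\|W_e\|_m^m$ contributions into the single expression $c_S\,e^{d/s-m}\int(V_+-e/2)_+^m$ before integrating in $e$.
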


\begin{proof}
  By the variational principle we can assume $V=V_+\geq0$.
  To prove \eqref{eq:classiclt} we apply \eqref{eq:triangle2}
  in Corollary \ref{triangle} for any $m_1>1$ and $m_2>d/s$
  and obtain
  \begin{align*}
    N_{e/2}((V(x)-e/2)_+)
    \leq c_{S}\left[e^{1-m_1}\int_{\R^d}(V(x)-e/2)_+^{m_1} + e^{d/s-m_2} \int_{\R^d}(V(x)-e/2)_+^{m_2}\right]\,.
  \end{align*}
  Plugging this into \eqref{eq:lt} with $\gamma>\max\{m_1-1,m_2-d/s\}$ yields
  \begin{align*}
    & \tr(T(-i\nabla)-V)_-^\gamma\\
    & \quad \leq c_{S} \int_0^\infty de\, \left[e^{\gamma-m_1}\int_{\R^d}dx\ (V(x)-e/2)_+^{m_1} + e^{\gamma-1+d/s-m_2}\int_{\R^d}dx\ (V(x)-e/2)_+^{m_2}\right] \\
    & \quad \leq c_S \int_{\R^d}dx\, (V(x)^{\gamma+1} + V(x)^{\gamma+d/s})\,,
  \end{align*}
  where $c_S$ also depends on $d,s,m,\gamma$.
  This proves \eqref{eq:classiclt}.
  
  To prove \eqref{eq:classiclt2} we instead use \eqref{eq:tracerestriction2}
  in Theorem \ref{tracerestriction} and plug the right side of
  \begin{align*}
    & N_{e/2}((V(x)-e/2)_+)\\
    & \quad \leq c_S (e^{d/s-m}\theta(e-1)+\theta(1-e))\int_{\R^d}(V(x)-e/2)_+^{m}\\
    & \qquad\quad + c_S \theta(1-e)\log(2+1/e)^m\left(\int_{\R^d}(V(x)-e/2)_+^{q}\right)^{\frac{m}{q}}\\
    & \quad \leq c_S e^{d/s-m}\int_{\R^d}(V(x)-e/2)_+^{m} + c_S \theta(1-e)\log(2+1/e)^m\left(\int_{\R^d}(V(x)-e/2)_+^{q}\right)^{\frac{m}{q}}
  \end{align*}
  into \eqref{eq:lt}.
  For $\gamma+d/s>m>d/s$ the first summand gives again rise to
  \begin{align*}
    \int_0^\infty de\ e^{\gamma-1+d/s-m}\int_{\R^d}(V(x)-e/2)_+^m\,dx
    \leq c_{S}\int_{\R^d} V(x)^{\gamma+d/s}\,dx\,,
  \end{align*}  
  whereas the second summand contributes with
  \begin{align*}
    \int_0^1 de\ e^{\gamma-1}\log(2+1/e)^m \left(\int_{\R^d}(V(x)-e/2)_+^{q}\right)^{\frac{m}{q}}
    \leq c_{S} \|V\|_q^m
  \end{align*}
  to the left hand side of \eqref{eq:classiclt2} for all $\gamma>0$.
  (As before, $c_S$ also depends on $d,s,m,q,\gamma,r$)
  This concludes the proof.
\end{proof}

\begin{remark}
  (1) The term $\|V_+\|_{\gamma+d/s}^{\gamma+d/s}$ on the right sides of
  \eqref{eq:classiclt}-\eqref{eq:classiclt2} comes from high
  energies as can be seen from the proofs of
  \eqref{eq:tracerestriction1}-\eqref{eq:tracerestriction2}.
  In Theorem \ref{classicltlattice} we will see that this term is absent
  for operators in $\ell^2(\Z^d)$ since only low energies are present.

  (2) The term $\|V_+\|_{\gamma+d/s}^{\gamma+d/s}$ is necessary, which can
  be seen by repeating the
  arguments in the proof of Theorem \ref{tracerestriction} with
  $|\Delta+\mu|$ and letting $\mu\to0$.
  
  (3) Estimate \eqref{eq:classiclt} also holds in case the level sets
  of $T$ are not curved and can be seen as a Lieb--Thirring inequality
  since the right hand side is ``local'' in the sense that it involves only
  integrals over $V$.
  In contrast, \eqref{eq:classiclt2} requires non-vanishing Gaussian curvature
  of the level sets. Moreover, \eqref{eq:classiclt2} is non-local in the
  sense that it involves powers of integrals of $V$.
\end{remark}

\subsection{Sums of logarithms of eigenvalues}
\label{s:loglt}

Suppose $r=(d-1)/2$, let $s\in[2d/(d+1),d)$, $V\in L^{(d+1)/2}$,
and assume that $T$ satisfies (1)-(4) in Assumption \ref{assumt}.
By \cite[Theorem 4.2]{CueninMerz2021}, for any eigenvalue $a_j^S>0$
of the operator $\cv_S=F_SVF_S^*$
in $L^2(S)$, there exists a negative eigenvalue $-e_j(\lambda)<0$ of
$H_\lambda=T-\lambda V$ with weak coupling limit
\begin{align}
  \label{eq:weakcoupling}
  e_j(\lambda) = \exp\left(-\frac{1}{2\lambda a_j^S}(1 + o(1))\right)\,,
  \quad \lambda\to0\,.
\end{align}
(Eigenvalues $-e_j<0$ corresponding to zero-eigenvalues of $\cv_S$ obey
$e_j(\lambda)=\me{-c_j\lambda^{-2}}$ for some $c_j>0$,
cf.~\cite[Theorem 4.4]{CueninMerz2021}).
On the other hand, as we have seen in \eqref{eq:bounden2} in Remark
\ref{jtheigenvalue}, if the $j$-th eigenvalue $-e_j(\lambda)$ is
greater than $-1$, then it satisfies
\begin{align}
  \label{eq:boundenoncemore}
  e_j(\lambda) \leq \exp\left(-\frac{j^{1/\sigma(q)}}{c_{S}^{1/\sigma(q)}\lambda \|V\|_{q}}\right)\,,
  \quad q\in\left[\frac{d}{s},\frac{d+1}{2}\right]\,.
\end{align}

Formulae \eqref{eq:weakcoupling}-\eqref{eq:boundenoncemore} illustrate
that the eigenvalues of $H_\lambda$ approach
$\inf\sigma_{\mathrm{ess}}(H_\lambda)=0$ exponentially fast. This suggests
to compute logarithmic moments of eigenvalues,
\begin{align*}
  \sum_{j} \left(\frac{1}{\log(\langle1/e_j\rangle)}\right)^\gamma\,, \quad \gamma>0\,.
\end{align*}
(Note that $1/\log(1/x)\geq x$ for $0<x<1/2$, say.)
For those eigenvalues $e_j(\lambda)$ corresponding to the $a_j^S$ in
the asymptotics \eqref{eq:weakcoupling}, estimate \eqref{eq:tsschatten}
implies, for $\lambda$ in a sufficiently small open neighborhood of $0$,
\begin{align}
  \label{eq:logltprelim}
  \begin{split}
    \sum_{j} \left(\frac{1}{\log(\langle1/e_j(\lambda)\rangle)}\right)^{\sigma(q)}
    & \sim \sum_j \left(1+\frac{1}{2\lambda a_S^j}\right)^{-\sigma(q)}
    \sim \lambda^{\sigma(q)} \tr (\cv_S)_+^{\sigma(q)}\\
    & \lesssim \lambda^{\sigma(q)}\|V\|_{L^q}^{\sigma(q)}\,,
  \end{split}
\end{align}
where $\sigma(q)$ is as in \eqref{eq:defsigmaq} and $q\in[1,(d+1)/2]$.
We now prove analogous estimates for $\lambda=1$, in which case we
cannot use the results in the weak coupling regime.

\begin{theorem}
  \label{loglt}
  Let $H=T-V$ with $T$ satisfying (1)-(4) in Assumption \ref{assumt} with $r>0$.
  Let $q\in[1,r+1]$ and $m=\sigma(q,r)$.
  Suppose additionally $m>d/s$ and let $V\in L^m\cap L^q(\R^d)$.
  Then for any $\gamma>m$ there is a constant $c_S$ (which also depends on
  $d,s,m,q,\gamma,r$) such that
  \begin{align}
    \label{eq:logltfinal}
    \sum_j [\log(\langle 1/e_j\rangle)]^{-\gamma}
    \leq c_{S} \|V_+\|_m^m + \|V_+\|_q^{m}\,.
  \end{align}
  Moreover, if $q\geq d/s$, then
  \begin{align}
    \label{eq:logltfinal2}
    \sum_j [\log(\langle 1/e_j\rangle)]^{-\gamma}
    \leq c_S \|V_+\|_q^{m}\,.
  \end{align}
\end{theorem}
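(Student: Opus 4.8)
The plan is to mimic the proof of Theorem~\ref{classiclt}, replacing the power weight $e^{\gamma-1}$ in \eqref{eq:lt} by a logarithmic weight, and invoking the Schatten bounds of Theorem~\ref{tracerestriction} for the Birman--Schwinger operator. First I would record the elementary analogue of \eqref{eq:lt} for logarithmic moments. Setting $g(x) := [\log\langle 1/x\rangle]^{-\gamma}$ for $x>0$ (which is bounded, increasing, and vanishes at $0$), we have, by the layer-cake formula and the variational principle \eqref{eq:noose},
\begin{align*}
  \sum_j g(e_j)
  = \int_0^\infty (-g'(e))\, N_e(V)\, \td e
  \leq \int_0^\infty (-g'(e))\, N_{e/2}((V_+ - e/2)_+)\, \td e\,,
\end{align*}
where one checks that $-g'(e) \sim \gamma\, e^{-1} \langle 1/e\rangle^{-2} [\log\langle 1/e\rangle]^{-\gamma-1}$, so that $-g'(e) \lesssim e^{-1}[\log(2+1/e)]^{-\gamma-1}$ for $0<e<1$ and $-g'(e) \lesssim e^{-3}$ for $e\geq 1$. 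Since $g$ is bounded one may also crudely bound the contribution of $e\geq1$ using $N_e(V)\leq N_1(V)$ together with \eqref{eq:tracerestriction2} at $e=1$; it is cleaner, though, to keep the weight $-g'(e)\lesssim e^{d/s - m - 1}$-type decay that comes from pairing $e^{-3}$ (or any integrable tail) against the high-energy Schatten bound.

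Next I would plug the Birman--Schwinger bound $N_{e/2}((V_+-e/2)_+) \leq \|BS(e/2)\|_m^m$ evaluated at the shifted potential $(V_+-e/2)_+$ into the integral, and use \eqref{eq:tracerestriction2} (resp.\ \eqref{eq:tracerestriction3} when $q\geq d/s$). For $0<e<1$ this produces two terms: one proportional to $\int_0^1 (-g'(e))\, \|V_+\|_m^m\, \td e \lesssim \|V_+\|_m^m$ since $-g'(e)$ is integrable near $0$ (here $\gamma>0$ already suffices, and indeed only this weak condition is needed for the $\|V_+\|_m^m$ piece), and one proportional to
\begin{align*}
  \int_0^1 (-g'(e))\, [\log(2+1/e)]^m\, \|V_+\|_q^m\, \td e
  \lesssim \|V_+\|_q^m \int_0^1 \frac{[\log(2+1/e)]^{m-\gamma-1}}{e}\, \td e\,,
\end{align*}
which is finite precisely when $m - \gamma - 1 < -1$, i.e.\ when $\gamma > m$ — this is the source of the hypothesis $\gamma>m$. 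For $e\geq1$ the high-energy factor $e^{d/s-m}$ in \eqref{eq:tracerestriction2} times the fast decay of $-g'(e)$ (and the monotonicity $\|(V_+-e/2)_+\|_p \leq \|V_+\|_p$) integrates to a finite multiple of $\|V_+\|_m^m + \min\{\|V_+\|_m,\|V_+\|_q\}^m$, which is absorbed into the right-hand side of \eqref{eq:logltfinal}. Combining the pieces gives \eqref{eq:logltfinal}; when $q\geq d/s$ one repeats the argument with \eqref{eq:tracerestriction3} in place of \eqref{eq:tracerestriction2}, which eliminates the bare $\|V_+\|_m^m$ term and yields \eqref{eq:logltfinal2}. (In the borderline case $q=d/s$ one uses the weak-Schatten version from the last sentence of Theorem~\ref{tracerestriction}, together with $N_e(V)=n(1,BS(e))\leq \|BS(e)\|_{m,\infty}^m$, which is still valid in \eqref{eq:boundnumber}.)

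The only genuinely delicate point is the bookkeeping of the logarithmic weight near $e=0$: one must verify that $-g'(e)$ is integrable against a single power of $[\log(2+1/e)]^{m}$ exactly when $\gamma>m$, and that the lower-order pieces (the $\|V_+\|_m^m$ terms, and the shifted-potential monotonicity $N_{e/2}((V_+-e/2)_+)\le\|BS(e/2)\|_m^m\le c_S[\dots]$) do not secretly need a stronger hypothesis. Everything else is a routine transcription of the proof of Theorem~\ref{classiclt}, since $g$ is bounded and the high-energy contribution is automatically tame. I do not anticipate needing any new analytic input beyond Theorem~\ref{tracerestriction} and the elementary estimates on $g'$.
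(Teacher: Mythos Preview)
Your approach is essentially the paper's: represent the logarithmic sum as $\int_0^\infty g'(r)\,N_r(V)\,\td r$ via layer-cake and apply the Schatten bounds of Theorem~\ref{tracerestriction}, with the condition $\gamma>m$ emerging exactly from the integrability of $r^{-1}[\log(2+1/r)]^{m-\gamma-1}$ near $r=0$. Two minor points: the layer-cake integrand should be $+g'(e)$, not $-g'(e)$ (you correctly say $g$ is increasing, so $g'>0$; indeed the paper's identity \eqref{eq:logltaux} is precisely $g(e)=\int_0^e g'(r)\,\td r$); and the shift to $(V_+-e/2)_+$ via \eqref{eq:noose} is superfluous here---the paper simply bounds $N_r(V_+)$ directly by \eqref{eq:boundnumber} and \eqref{eq:tracerestriction2}, since unlike in Theorem~\ref{classiclt} the weight $g'$ is already integrable against $[\log(2+1/r)]^m$ and no gain from replacing $V_+$ by $(V_+-e/2)_+$ is needed.
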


\begin{proof}
  By the variational principle we can again assume $V=V_+$.
  To estimate the left side of \eqref{eq:logltfinal} we use
  \begin{align}
    \label{eq:logltaux}
    \begin{split}
      \frac{1}{\left(\log(\langle e^{-1}\rangle)\right)^\gamma}
      = \gamma\int_0^{e} \left(\log(\langle r^{-1}\rangle)\right)^{-\gamma-1}\cdot\left\langle\frac{1}{r}\right\rangle^{-2}\frac{dr}{r^3}
    \end{split}
  \end{align}
  for $\gamma>0$. Thus,
  \begin{align}
    \label{eq:logltaux2}
    \begin{split}
      \sum_j \frac{1}{\left(\log(\langle e_j^{-1}\rangle)\right)^\gamma}
      & = \gamma\int_0^\infty \left(\log(\langle r^{-1}\rangle)\right)^{-\gamma-1}\cdot\left\langle\frac{1}{r}\right\rangle^{-2}\sum_j\theta(e_j-r)\frac{dr}{r^3}\\
      & = \gamma\int_0^\infty \left(\log(\langle r^{-1}\rangle)\right)^{-\gamma-1}\cdot\left\langle\frac{1}{r}\right\rangle^{-2}N_{r}(V)\frac{dr}{r^3}\,.
    \end{split}
  \end{align}
  By \eqref{eq:boundnumber} and \eqref{eq:tracerestriction2} in Theorem
  \ref{tracerestriction} for $m>d/s$,
  we estimate
  \begin{align}
    \label{eq:logltaux3}
    N_{r}(V) \lesssim \|V\|_m^m + \left(\log\left(2+\frac{1}{r}\right)\right)^m \|V\|_q^m\,.
  \end{align}
  Thus, the left side of \eqref{eq:logltfinal} can be estimated by
  \begin{align}
    \label{eq:loglt1}
    \begin{split}
      \sum_j \left(\frac{1}{\log(\langle 1/e_j\rangle)}\right)^\gamma
      & \lesssim \|V\|_m^m \int_0^\infty \frac{dr}{r^3}\langle r^{-1}\rangle^{-2}\left(\log(\langle r^{-1}\rangle)\right)^{-\gamma-1}\\
      & \quad + \|V\|_q^m \int_0^\infty \frac{dr}{r^3}\langle r^{-1}\rangle^{-2}\left(\log(\langle r^{-1}\rangle)\right)^{-\gamma-1}\cdot \left(\log\left(2+\frac{1}{r}\right)\right)^{m}\\
      & \lesssim \|V\|_m^m + \|V\|_q^m\,.
    \end{split}
  \end{align}
  This concludes the proof of \eqref{eq:logltfinal}.
  The proof of \eqref{eq:logltfinal2} is completely analogous, but
  uses \eqref{eq:tracerestriction3} instead of \eqref{eq:tracerestriction2}.
  Thus, estimate \eqref{eq:logltaux3} is replaced by
  \begin{align}
    N_{r}(V) \lesssim \|V\|_m^q \left[1 + \left(\log\left(2+\frac{1}{r}\right)\right)^m \right]\,.
  \end{align}
  Proceeding as in the proof of \eqref{eq:logltfinal} concludes the proof of
  \eqref{eq:logltfinal2}.
\end{proof}

\begin{remarks}
  \label{logltremark}
  (1) In contrast to the right side of \eqref{eq:classiclt}, the powers of $V$
  appearing on the right side of \eqref{eq:logltfinal} are all the same.

  (2) For $r=(d-1)/2$ and $m=d+1$ the power $d+1$ on the right side of
  \eqref{eq:logltfinal} is consistent with that on the right side of
  \eqref{eq:logltprelim}. However, \eqref{eq:logltfinal} is slightly
  weaker than \eqref{eq:logltprelim} due to the assumption $\gamma>d+1$
  and, if $q<d/s$, the additional $\|V\|_{d+1}^{d+1}$ term on the right of
  \eqref{eq:logltfinal}.

  (3) We do not know whether the restriction $\gamma>m$ (especially
  $\gamma>d+1$ for $r=(d-1)/2$ and $m=d+1$) is necessary.
\end{remarks}

\subsection{CLR bounds in $L^2(\R^d)$}
\label{s:cwikel}

Recall that $N_0(V_+)$ equals the number of eigenvalues of
$V_+^{1/2}T^{-1}V_+^{1/2}$ above one, which can be estimated by
\eqref{eq:boundnumber}.
Formula \eqref{eq:evboundtraceideal}, Cwikel's bound
\cite{Cwikel1977}, i.e.,
\begin{align}
  \label{eq:cwikel}
  \|f(-i\nabla)g(x)\|_{\cs^{p,\infty}(L^2(\R^d))}
  \lesssim_p \|f\|_{L^{p,\infty}(\R^d)}\|g\|_{L^p(\R^d)}\,,
  \quad p\in(2,\infty)\,,
\end{align}
and \eqref{eq:boundnumber} yield the classical
Cwikel--Lieb--Rosenbljum (CLR) bound
\cite{Cwikel1977,Lieb1976B,Rosenbljum1972}
\begin{align*}
  \|V_+^{1/2}(-\Delta)^{-1}V_+^{1/2}\|_{\cs^{d/2,\infty}(L^2(\R^d))}
  \lesssim_d \||\xi|^{-1}\|_{L^{d,\infty}} \|V_+\|_{L^{d/2}}
\end{align*}
for the number of negative eigenvalues $N_{0}(V)$ of $T-V$ when
$T=-\Delta$ in $d\geq3$. Such bounds can never hold in $d=1,2$, or
for $T$ satisfying Assumption \ref{assumt} in $d\geq2$ due to the
existence of weakly coupled bound states
\cite{Simon1976,Laptevetal2002,Franketal2007T,HainzlSeiringer2008C,HainzlSeiringer2010,Hoangetal2022,CueninMerz2021}.

Interestingly, using \eqref{eq:cwikel}, one does obtain a CLR bound for powers
$T=|\Delta+1|^{1/\sigma}$ of the BCS operator $|\Delta+1|$ in $L^2(\R^2)$
when $\sigma>1$. This follows from
the uniform bound $\int_0^1 (t+e)^{-1/\sigma}\,dt\lesssim_s1$ for all $e\geq0$
and $\sigma>1$.
The following theorem generalizes this observation to
powers $T^{1/\sigma}$ with $T$ satisfying (1)-(3) in Assumption \ref{assumt}
to all $d\in\N$.
The proof is inspired by Frank \cite{Frank2014}, which, in turn, uses ideas
of Rumin \cite{Rumin2010,Rumin2011}.

\begin{theorem}
  \label{clrbcs}
  Let $d\in\N$, $\sigma>1$, and
  suppose $T(\xi)$ satisfies (1)-(3) in Assumption \ref{assumt} with
  the weaker assumption $s\leq d$.
  Then, for $V\in L^{\sigma}\cap L^{\frac{\sigma d}{s}}(\R^d)$, one has
  \begin{align}
    \label{eq:clrbcs}
    n(1,|V|^{\frac12}T^{-1/\sigma}V^{\frac12})
    \lesssim_{S,\sigma,s,d,\tau} \|V_+\|_{L^{\sigma}(\R^d)}^\sigma + \|V_+\|_{L^{\sigma d/s}(\R^d)}^{\sigma d/s}\,.
  \end{align}
\end{theorem}

\begin{proof}
  By the variational principle we can assume $V=V_+$.
  We first show how to prove \eqref{eq:clrbcs} for $s=d\in\N$ using
  Cwikel's estimate \eqref{eq:cwikel}. For $\beta>0$ a straightforward
  computation shows
  \begin{align}
    \label{eq:clrbcsaux1}
    \begin{split}
      \|T(\xi)^{-1/(2\sigma)}\|_{L^{2p,\infty}(\R^d)}^{2p}
      & = \sup_{\beta>0}\beta^{-2p} \left|\{\xi\in\R^d:\, T(\xi)^{-1/(2\sigma)}>1/\beta\}\right|\\
      & \lesssim \sup_{\beta>0} \beta^{-2p}\left(\beta^{2\sigma}\one_{\{\beta\leq1\}} + \beta^{2\sigma\cdot d/s}\one_{\{\beta\geq1\}}\right)\,.
    \end{split}
  \end{align}
  For the right side to be finite we need $p=\sigma$ and $s=d$.
  Thus, by \eqref{eq:boundnumber} and Cwikel's estimate \eqref{eq:cwikel},
  we obtain for $\sigma>1$,
  \begin{align*}
    n(1,|V|^{\frac12}T^{-1/\sigma}V^{\frac12})
    \leq \|V^{\frac12}T^{-\frac1\sigma}V^{\frac12}\|_{\cs^{\sigma,\infty}(L^2(\R^d))}^\sigma
    \leq \|T^{-\frac{1}{2\sigma}}V^{\frac12}\|_{\cs^{2\sigma,\infty}(L^2(\R^d))}^{2\sigma}
    \lesssim \|V\|_{L^{\sigma}(\R^d)}^\sigma\,,
  \end{align*}
  which concludes the proof for $s=d$. We will now show \eqref{eq:clrbcs}
  for $s<d$ by proceeding as in \cite{Frank2014}.
  Let $\Gamma$ be an arbitrary operator in $L^2(\R^d)$ satisfying
  $0\leq\Gamma\leq T^{-1/\sigma}$ and $\rho_\Gamma(x):=\Gamma(x,x)$.
  Let $P_E:=\one_{(E,\infty)}(T^{1/\sigma})$ and $P_E^\perp=1-P_E$.
  We shall now estimate
  \begin{align}
    \label{eq:clrbcsaux2}
    \tr(\Gamma^{1/2}T^{1/\sigma}\Gamma^{1/2})
    = \int_{\R^d} dx\int_0^\infty dE\, (P_E\Gamma P_E)(x,x)
  \end{align}
  from below.
  By a density argument it suffices to consider the case where $\Gamma$
  has finite-rank and smooth eigenfunctions.
  For any subset $\Omega\subseteq\R^d$ of finite measure we have
  \begin{align}
    \label{eq:clrbcsaux3}
    \begin{split}
      \left(\int_\Omega\rho_\Gamma(x)\,dx\right)^{1/2}
      & = \|\Gamma^{1/2}\one_\Omega\|_2
      \leq \|\Gamma^{1/2}P_E\one_\Omega\|_2 + \|\Gamma^{1/2}P_E^\perp\one_\Omega\|_2\\
      & \leq \|\Gamma^{1/2}P_E\one_\Omega\|_2 + |\Omega|^{1/2}\sqrt{F(E)}
    \end{split}
  \end{align}
  where we used $\Gamma\leq T^{-1/\sigma}$ and defined
  \begin{align}
    \label{eq:clrbcsaux4}
    \begin{split}
      F(E) & := \frac{\|T^{-1/(2\sigma)}P_E^\perp\one_\Omega\|_2^2}{|\Omega|}
      = \int_{\R^d}\frac{d\xi}{T(\xi)^{1/\sigma}}\one_{\{T(\xi)^{1/\sigma}<E\}}\\
      & = \int_{\R^d}d\xi\, \one_{\{T(\xi)^{1/\sigma}<E\}}\int_0^\infty dz\, \one_{\{z<T(\xi)^{-1/\sigma}\}} \\
      & = \int_0^\infty dz\, |\{\xi\in\R^d:\, T(\xi)^{1/\sigma}<E\wedge z^{-1}\}|\,. 
    \end{split}
  \end{align}
  By \eqref{eq:clrbcsaux1} we have
  \begin{align}
    \label{eq:clrbcsaux5}
    \begin{split}
      F(E)
      & \leq \int_0^\infty dz \left[(E\wedge z^{-1})^\sigma \one_{\{E\wedge z^{-1}<1\}} + (E\wedge z^{-1})^{\sigma d/s} \one_{\{E\wedge z^{-1}>1\}}\right]\\
      & = E^{\sigma-1}\left[\int_0^1 dz\, \one_{\{E<1\}} + \int_1^\infty dz\, z^{-\sigma}\one_{\{z>E\}}\right] \\
      & \quad + E^{\sigma d/s-1}\left[\int_0^1 dz\, \one_{\{E>1\}} + \int_1^\infty dz\, z^{-\sigma d/s}\one_{\{z<E\}}\right]\\
      & \sim E^{\sigma-1} + E^{\sigma d/s-1}\,.
    \end{split}
  \end{align}
  From this and \eqref{eq:clrbcsaux3}, it follows from Lebesgue's
  differentiation theorem that
  \begin{align}
    \label{eq:clrbcsaux6}
    \begin{split}
      (P_E \Gamma P_E)(x,x)
      \geq \left(\sqrt{\rho_\Gamma(x)} - \sqrt{F(E)}\right)_+^2
      \geq \left(\sqrt{\rho_\Gamma(x)} - c\cdot(E^{\frac{\sigma-1}{2}} + E^{\frac{\sigma d/s-1}{2}})\right)_+^2
    \end{split}
  \end{align}
  for almost every $x\in\R^d$.
  Integration over $E$ shows
  \begin{align}
    \label{eq:clrbcsaux7}
    \tr(\Gamma T^{1/\sigma})
    \gtrsim \int_{\R^d}dx\, \left(\rho_\Gamma(x)^{\frac{\sigma}{\sigma-1}}\one_{\{\rho_\Gamma\leq1\}} + \rho_\Gamma(x)^{\frac{\sigma d/s}{\sigma d/s-1}}\one_{\{\rho_\Gamma\geq1\}}\right) 
  \end{align}
  for all $0\leq\Gamma\leq T^{-1/\sigma}$. By a slight generalization of
  the duality principle in \cite[Lemma~2.4]{Frank2014}, formula
  \eqref{eq:clrbcsaux7} is equivalent to
  \begin{align}
    \tr(T^{-\frac{1}{2\sigma}}VT^{-\frac{1}{2\sigma}}-\mu)_+
    \lesssim \mu^{-\sigma+1}\int_{\R^d}V^{\sigma}(x)\,dx + \mu^{-\frac{\sigma d}{s}+1}\int_{\R^d}V^{\frac{\sigma d}{s}}(x)\,dx
  \end{align}
  for any $\mu>0$. Noting that for any $\mu<1$, we have
  \begin{align}
    n(1,V^{\frac12}T^{-1/\sigma}V^{\frac12})
    = n(1,T^{-\frac{1}{2\sigma}}VT^{-\frac{1}{2\sigma}})
    \leq (1-\mu)^{-1}\tr(T^{-\frac{1}{2\sigma}}VT^{-\frac{1}{2\sigma}}-\mu)_+\,,
  \end{align}
  which concludes the proof of Theorem \ref{clrbcs}.
\end{proof}

\section{Schr\"odinger operators with degenerate kinetic energy in $\ell^2(\Z^d)$}
\label{s:discretesetting}

In this section we prove analogs of the previous results for lattice 
Schr\"odinger operators. 
We first review two instances of the Laplacian in $\ell^2(\Z^d)$
and discuss an analog of the BCS operator $|\Delta+1|$ in $\ell^2(\Z^d)$.
Subsequently, we state and prove our results on numbers and sums of
functions of eigenvalues.

\subsection{Laplace and BCS-type operators in $\ell^2(\Z^d)$}

\subsubsection{Ordinary lattice Laplace}

The standard lattice Laplacian is defined by 
\begin{align}
  \label{eq:deflatticelaplacepositionspace}
  - \Delta u(n)
 = \frac{1}{2d}\sum_{\|m-n\|_2=1}u(m).
\end{align}
Its spectrum is absolutely continuous and equal to $[-1,1]$.
The Fourier multiplier associated to
\eqref{eq:deflatticelaplacepositionspace} is given by
$d^{-1}\sum_{j=1}^d\cos(2\pi\xi_j)$.
Let $Z$ denote the set of critical values of this symbol.
The level sets
\begin{align}
  \label{eq:fermisurfacelatticelaplace}
  S_t := \left\{\xi\in\T^d:\,\frac{1}{d}\sum_{j=1}^d\cos(2\pi\xi_j)=t\right\}\,,
  \quad t\in [-1,1]\setminus Z
\end{align}
are strictly convex (i.e.\ have everywhere positive Gaussian curvature) in $d=2$,
cf.~\cite[Lemma 3.3]{Schlagetal2002}.
This implies that, for $d=2$, item (4) in Assumption \ref{assumt} holds with $r=1/2$.
In higher dimensions, $S_t$ is not convex for $|t|<1-2/d$,
cf.~\cite{ShabanVainberg2001}.
For $d=3$, Erd\H{o}s--Salmhofer \cite{ErdosSalmhofer2007} obtained the sharp decay
of the Fourier transform of the surface measure up to logarithmic factors.
Recently, Schippa and the first author \cite{CueninSchippa2022} provided a
simpler proof and obtained the sharp bound
\begin{align*}
  |(d\Sigma_{S_t})^\vee(x)|\lesssim (1+|x|)^{-3/4},
\end{align*}
provided $t\neq 1$ (The level set $S_1$ contains a flat umbilic point, see
\cite{ErdosSalmhofer2007}).
This also follows from results of Taira \cite{Taira2021U}, which are based
on the work of Ikromov and M\"uller \cite{IkromovMuller2016} and involve
Newton polygon methods. We conclude that, for $d=3$, assumption (4) holds
with $r=3/4$. We are not aware of any sharp estimates in dimensions $d>3$.

\subsubsection{Molchanov--Vainberg Laplace}
\label{ss:molchanovvainberg}

Molchanov and Vainberg \cite{MolchanovVainberg1999} considered the following
modification of $-\Delta$, which is defined by
\begin{align}
  \label{eq:defmolvainlaplaceposition}
  -\Delta_{\MV}\psi(n) = 2^{-d}\sum_{\|m-n\|_2=\sqrt d}\psi(m)\,.
\end{align}
Again, its spectrum is absolutely continuous and equal to $[-1,1]$.
The level sets of the associated Fourier multiplier
$\prod_{j=1}^d\cos(2\pi\xi_j)$
are given by
\begin{align*}
  S_t:= \left\{\xi\in\T^d:\,\prod_{j=1}^d\cos(2\pi\xi_j)=t\right\}\,,
  \quad t\in [-1,1]\setminus Z\,.
\end{align*}
The advantage over the standard Laplacian is that the level sets $S_t$
are strictly convex for all $t\in(-1,1)$ as Poulin
\cite[Theorems~1.1 and 3.4]{Poulin2007} showed. Hence, for the
Molchanov--Vainberg Laplacian, item~(4) in Assumption \ref{assumt} holds
with $r=(d-1)/2$ for all $d\geq2$.

\subsubsection{An analog of the BCS operator in $\ell^2(\Z^d)$}

We can define the analog of the BCS operator in $\ell^2(\Z^d)$
by the Fourier multiplier
\begin{align}
  \label{eq:deflatticebcs}
  T(\xi) := \left|P(\xi)-\mu\right|,
\end{align}
where $P$ is the symbol of the standard Laplacian or the
Molchanov--Vainberg Laplacian and $\mu\in [-1,1]\setminus Z$ is
the Fermi energy (we took $\mu=1$ in the continuum).

\subsection{Number of eigenvalues below a threshold}
\label{s:ltlattice}

We now generalize the results of Section \ref{s:classiclt}
to $T-V$ in $\ell^2(\Z^d)$.
Our assumptions on $T$ are the same as in Assumption \ref{assumt}
with the exception that the ellipticity assumption (3) there is not
needed here.
We recall that item~(4) with $r=(d-1)/2$ in Assumption \ref{assumt}
holds for the BCS operator in \eqref{eq:deflatticebcs} in $d=2$ and
its analog where $-\Delta$ is replaced by $-\Delta_\MV$ in all $d\geq2$.

\begin{theorem}
  \label{tracerestrictionlattice}
  Let $e>0$ and suppose $T(\xi)$ satisfies (1) and (2) in Assumption \ref{assumt}.
  \begin{enumerate}
  \item Let $m\geq1$. If $V\in\ell^m(\Z^d)$, then there exists a constant
    $c_{S}>0$ (which also depends on $d,\tau,m$) such that
    \begin{align}
      \label{eq:tracerestrictionlattice1}
      \|BS(e)\|_{\cs^m(\ell^2(\Z^d))}^m
      \leq c_{S} \min\{e^{1-m},e^{-m}\} \|V\|_{\ell^m(\Z^d)}^m\,.
    \end{align}
    
  \item Suppose $T$ also satisfies (4) in Assumption \ref{assumt} with
    $r\in(0,(d-1)/2]$.
    Let $q\in[1,r+1]$ and $m=\sigma(q,r)$.
    If $V\in \ell^q(\Z^d)=\ell^m\cap\ell^q(\Z^d)$, then there is a constant
    $c_S>0$ (which also depends on $d,\tau,m,q,r$) such that
    \begin{align}
      \label{eq:tracerestrictionlattice4}
      \begin{split}
        & \|BS(e)\|_{\cs^m(\ell^2(\Z^d))}^m
        \leq c_{S}\|V\|_q^m \min\{\log(2+1/e),1/e\}^m\,.
      \end{split}
    \end{align}    
    In particular,
    \begin{align}
      \label{eq:tracerestrictionlattice3}
      \begin{split}
        & \|BS(e)\|_{\cs^m(\ell^2(\Z^d))}^m
        \leq c_S\left[ (\log(2+1/e))^m\|V\|_q^m\theta(1-e) + e^{-m}\|V\|_m^m\theta(e-1)\right]\,.
      \end{split}
    \end{align}
  \end{enumerate}
\end{theorem}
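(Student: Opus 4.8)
The plan is to mirror the proof of Theorem~\ref{tracerestriction} but to exploit the crucial simplification available on the lattice: the symbol $T(\xi)=|P(\xi)-\mu|$ is bounded on the torus $\T^d$, so the ``high energy'' regime $T>\tau$ is simply empty and the bump function $\chi(T/\tau)$ can be taken identically equal to $1$. Concretely, for part~(1) I would first note that $T(-i\nabla)+e$ is bounded below by $e$ and $T$ ranges over the precompact set $[0,\tau]$ with $\tau=\co(1)$, so by the spectral theorem
\begin{align*}
  (T(-i\nabla)+e)^{-1} = \int_0^\tau (t+e)^{-1}\, \mathrm{d}E_{T}(t)\,,
\end{align*}
where $\mathrm{d}E_T(t)$ is the spectral measure of $T(-i\nabla)$, and the layer-cake/coarea structure identifies $\mathrm{d}E_T(t)$ with $F_{S_t}^*F_{S_t}\,\mathrm{d}t$ up to the Leray-measure normalization, exactly as in \eqref{eq:highlowaux}. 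Then I would apply the Araki--Lieb--Thirring inequality \eqref{eq:lttrace} with $A=(T+e)^{-1}$, $B=|V|$, followed by Minkowski's integral inequality in $\cs^1$ and the trivial $\cs^1$-estimate \eqref{eq:tomassteinschatten} (the Stein--Tomas estimate with $q=1$), which gives $\||V|^{1/2}F_{S_t}^*F_{S_t}|V|^{1/2}\|_{\cs^1}\le \sigma_{S_t}(S_t)\,\|V\|_m^m$; the sup over $t\in[0,\tau]$ of $\sigma_{S_t}(S_t)$ is a fixed $\co(1)$ constant by assumption (2) exactly as in \eqref{eq:dependenceS}. What remains is the elementary integral $\int_0^\tau (t+e)^{-m}\,\mathrm{d}t$, which is $\lesssim e^{1-m}$ when $m>1$ (and $\lesssim \log(1+\tau/e)\lesssim e^{-1}\wedge\ldots$, giving the $\min\{e^{1-m},e^{-m}\}$ uniformly in $m\ge 1$ after also using the crude bound $(t+e)^{-m}\le e^{-m}$ and $\tau=\co(1)$). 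For the borderline $m=1$ one just uses $(t+e)^{-1}\le e^{-1}$ directly.

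For part~(2), I would repeat the same structure but replace the $\cs^1$ Stein--Tomas bound by the Schatten-refined restriction estimate \eqref{eq:franksabingen} from Proposition~\ref{franksabingen}: for $q\in[1,r+1]$ and $m=\sigma(q,r)$ one has $\||V|^{1/2}F_{S_t}^*F_{S_t}|V|^{1/2}\|_{\cs^{m}} \lesssim \|V\|_{2q}^2 \cdot\|V\|_{\ldots}$ — more precisely $\lesssim \|V\|_{\ell^q}^{?}$ after writing $W_1=W_2=|V|^{1/2}$ so that $\|W_i\|_{L^{2q}}=\|V\|_{\ell^q}^{1/2}$, whence the $\cs^m$-norm of the integrand is $\lesssim \|V\|_{\ell^q}$ uniformly in $t\in(0,\tau)$ by assumption (4). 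Then Minkowski in $\cs^m$ gives
\begin{align*}
  \|BS(e)\|_{\cs^m}
  = \||V|^{1/2}(T+e)^{-1}|V|^{1/2}\|_{\cs^m}
  \le \int_0^\tau \frac{\mathrm{d}t}{t+e}\,\||V|^{1/2}F_{S_t}^*F_{S_t}|V|^{1/2}\|_{\cs^m}
  \lesssim \|V\|_q \min\{\log(2+1/e),\, 1/e\}\,,
\end{align*}
exactly as in \eqref{eq:tstrace}, using $\int_0^\tau (t+e)^{-1}\,\mathrm{d}t = \log(1+\tau/e)\lesssim \log(2+1/e)$ and also $\le \tau/e\lesssim 1/e$. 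Raising to the $m$-th power yields \eqref{eq:tracerestrictionlattice4}. Since $\T^d$ has finite measure and the counting measure on $\Z^d$ dominates, one has $\ell^q(\Z^d)\subseteq\ell^m(\Z^d)$ whenever $m\ge q$, and $\sigma(q,r)\ge q$ holds precisely when $r\le (d-1)/2$ (Remark~\ref{remfranksabingen}(1)), which is why the hypothesis $r\in(0,(d-1)/2]$ appears and makes the space identification $\ell^q=\ell^m\cap\ell^q$ legitimate. The final display \eqref{eq:tracerestrictionlattice3} is then just a convenient weakening of \eqref{eq:tracerestrictionlattice4}: on $\{e<1\}$ use the $\log$ form together with $\|V\|_q\le\|V\|_q$ directly, and on $\{e\ge1\}$ bound $\min\{\log(2+1/e),1/e\}\le 1/e$ and interpolate/embed $\|V\|_q$ against $\|V\|_m$ using $\ell^q\subseteq\ell^m$ to land on the $e^{-m}\|V\|_m^m$ term.

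I expect no serious obstacle here — the whole point of the lattice setting is that the delicate high-energy analysis of Theorem~\ref{tracerestriction} (and the attendant ellipticity hypothesis (3)) disappears. The one place requiring a little care is the spectral-measure disintegration $(T(-i\nabla)+e)^{-1}=\int_0^\tau (t+e)^{-1} F_{S_t}^*F_{S_t}\,\mathrm{d}t$ together with the validity of Minkowski's inequality for Schatten norms of operator-valued integrals; this is standard (it is implicit in \eqref{eq:highlowaux}) but one should make sure the level sets $S_t$ are genuinely smooth compact hypersurfaces for a.e.\ $t\in(0,\tau)$, which is guaranteed by $|\nabla P|\ge c_P>0$ on $\Omega$ (assumption (2)) via the implicit function theorem, and that the exceptional set of critical values $Z$ (where $S_t$ may degenerate) has measure zero, which holds since $\mu\notin Z$ and $Z$ is finite for the trigonometric-polynomial symbols under consideration. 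With these standard facts in hand, the argument is a direct transcription of \eqref{eq:highlowaux}--\eqref{eq:tstrace}.
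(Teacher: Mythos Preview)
Your approach is essentially identical to the paper's: transcribe the proof of Theorem~\ref{tracerestriction}, drop the high-energy piece since $T$ is bounded on $\T^d$, and use the nestedness $\ell^q\subseteq\ell^m$ (valid because $\sigma(q,r)\ge q$ when $r\le(d-1)/2$). Two small corrections are in order. First, after Araki--Lieb--Thirring the weights appearing in the $\cs^1$-estimate should be $|V|^{m/2}$, not $|V|^{1/2}$; your stated bound $\sigma_{S_t}(S_t)\,\|V\|_m^m$ is correct, but the displayed operator is mistyped. Second, and more substantively, your derivation of \eqref{eq:tracerestrictionlattice3} on $\{e\ge 1\}$ is backwards: the embedding $\ell^q\subseteq\ell^m$ gives $\|V\|_m\le\|V\|_q$, so from $e^{-m}\|V\|_q^m$ you cannot pass \emph{downward} to $e^{-m}\|V\|_m^m$. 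The fix --- and what the paper does --- is simply to invoke part~(1) on $\{e\ge 1\}$, where $\min\{e^{1-m},e^{-m}\}=e^{-m}$, yielding the term $e^{-m}\|V\|_m^m$ directly; then \eqref{eq:tracerestrictionlattice3} is the union of \eqref{eq:tracerestrictionlattice4} on $\{e<1\}$ and \eqref{eq:tracerestrictionlattice1} on $\{e\ge 1\}$.
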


\begin{proof}
  The proofs of \eqref{eq:tracerestrictionlattice1} and
  \eqref{eq:tracerestrictionlattice4} are exactly the same as
  those of \eqref{eq:tracerestriction1} and \eqref{eq:tracerestriction2}
  in the continuum case with two exceptions.
  Due to the absence of high energies in the estimate involving the
  Kato--Seiler--Simon inequality, any $m\geq1$ becomes admissible and the
  $e^{d/s}$-factors for $e>1$ are absent.
  Secondly, by the nestedness of the $\ell^p$ spaces, we may estimate
  $\|V\|_{m} \lesssim \|V\|_{q}$ since $\sigma(q,r)\geq q$
  (cf.~(1) in Remark \ref{remfranksabingen}) to dispose of $\|V\|_m$-norms.
  Estimate \eqref{eq:tracerestrictionlattice3} follows from 
  \eqref{eq:tracerestrictionlattice1}-\eqref{eq:tracerestrictionlattice4}.
\end{proof}

\subsection{Sums of powers of eigenvalues}
The previous estimates allow us to prove an analog of Theorem \ref{classiclt} for
the lattice Schr\"odinger operators considered here.

\begin{theorem}
  \label{classicltlattice}
  Suppose $T(\xi)$ satisfies (1) and (2) in Assumption \ref{assumt}.
  \begin{enumerate}
  \item If $\gamma>0$ and $V\in \ell^{\gamma+1}(\Z^d)$, then there exists a
    constant $c_S>0$ (which also depends on $d,\tau,\gamma$) such that
    \begin{align}
      \label{eq:classicltlattice1}
      \tr_{\ell^2(\Z^d)}(T(-i\nabla)-V)_-^\gamma
      \leq c_{S} \sum_{x\in\Z^d} V_+(x)^{\gamma+1}\,.
    \end{align}

  \item Suppose $T$ also satisfies (4) in Assumption \ref{assumt} with
    $r\in(0,(d-1)/2]$.
    Let $q\in[1,r+1]$ and $m=\sigma(q,r)$.
    Suppose $\delta\in[0,m]$, $\gamma>\delta$, and
    $V\in \ell^{m+\gamma-\delta}\cap\ell^q(\Z^d)$.
    Then $q<m+\gamma-\delta$ and there is a constant
    $c_S$ (which also depends on $d,\tau,m,q,\delta,\gamma,r$) such that
    \begin{align}
      \label{eq:classicltlattice2}
      \tr_{\ell^2(\Z^d)}(T(-i\nabla)-V)_-^\gamma
      \leq c_{S} (\|V_+\|_q^m + \|V_+\|_{m+\gamma-\delta}^{m+\gamma-\delta})\,.
    \end{align}
  \end{enumerate}
\end{theorem}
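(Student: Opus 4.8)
The plan is to mimic the proof of Theorem~\ref{classiclt} in the continuum, using the lattice Birman--Schwinger bounds from Theorem~\ref{tracerestrictionlattice} together with the variational reduction \eqref{eq:noose}--\eqref{eq:lt}, which carry over verbatim to $\ell^2(\Z^d)$. By the variational principle we may assume $V=V_+\geq0$. For part~(1), apply \eqref{eq:tracerestrictionlattice1} with $m=\gamma+1\geq1$ (note $\gamma>0$ makes this legitimate) to $N_{e/2}((V-e/2)_+)$ via \eqref{eq:boundnumber}, obtaining
\begin{align*}
  N_{e/2}((V-e/2)_+) \leq c_S \min\{e^{1-m},e^{-m}\} \sum_{x}(V(x)-e/2)_+^{m}\,.
\end{align*}
Plugging this into the lattice analog of \eqref{eq:lt} and integrating in $e$: the factor $\min\{e^{1-m},e^{-m}\}$ equals $e^{1-m}$ for $e\leq1$ and $e^{-m}$ for $e\geq1$, so $\int_0^\infty e^{\gamma-1}\min\{e^{1-m},e^{-m}\}\,de$ converges at $0$ since $\gamma-1+1-m=\gamma-m=-1$\,---\,wait, this is the borderline case, so one should instead keep $m=\gamma+1-\delta'$ for a small $\delta'>0$ or, more cleanly, simply take $m=\gamma+1$ and observe $\int_0^1 e^{\gamma-m}\,de=\int_0^1 e^{-1}\,de$ diverges; hence one should choose $m$ strictly between $1$ and $\gamma+1$, say $m=1$ if $\gamma\geq \dots$, and interpolate\,---\,the robust route is to pick any $m\in(1,\gamma+1)$, giving convergence at both ends and the bound $c_S\sum_x V(x)^{\gamma+1}$ after the elementary estimate $\int_0^\infty e^{\gamma-1}(V(x)-e/2)_+^m\,de\lesssim V(x)^{\gamma+1}$ valid whenever $m<\gamma+1$. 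Only a single power appears because the lattice has no high-energy tail, which is the point of Remark~(1) following Theorem~\ref{classiclt}.

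For part~(2) I would use \eqref{eq:tracerestrictionlattice3} in place of \eqref{eq:tracerestriction2}: via \eqref{eq:boundnumber} and the variational step,
\begin{align*}
  N_{e/2}((V-e/2)_+)
  & \leq c_S (\log(2+2/e))^m \Big(\sum_x (V(x)-e/2)_+^q\Big)^{m/q}\theta(1-e)\\
  & \quad + c_S e^{-m} \sum_x (V(x)-e/2)_+^{m'}\,\theta(e-1)\,,
\end{align*}
where on the $\theta(e-1)$ piece I have the freedom to replace the $\ell^m$ norm by any $\ell^{m'}$ norm with $m'\geq1$ by nestedness; choosing $m'=m+\gamma-\delta$ (which exceeds $1$ and exceeds $q$ since $\sigma(q,r)\geq q$ and $\gamma>\delta$, giving the claimed inequality $q<m+\gamma-\delta$) makes the high-$e$ integral $\int_1^\infty e^{\gamma-1-m}\sum_x(V(x)-e/2)_+^{m'}\,de\lesssim \sum_x V(x)^{m+\gamma-\delta}$, which converges at infinity precisely because $m'=m+\gamma-\delta>\gamma$. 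The low-$e$ integral is $\int_0^1 e^{\gamma-1}(\log(2+2/e))^m\big(\sum_x(V(x)-e/2)_+^q\big)^{m/q}\,de\lesssim \|V\|_q^m$, since the logarithmic factor is harmless against $e^{\gamma-1}$ for $\gamma>0$ and $(\sum_x(V(x)-e/2)_+^q)^{m/q}\leq\|V\|_q^m$. Summing the two contributions and feeding the result into \eqref{eq:lt} yields \eqref{eq:classicltlattice2}.

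The only genuinely delicate point is bookkeeping the exponents so that both tails of the $e$-integral converge while keeping the $\log$ factor under control and respecting $m'\geq1$ and the admissibility $m\geq1$ in \eqref{eq:tracerestrictionlattice1}; the role of the auxiliary parameter $\delta$ is exactly to interpolate between the ``pure $\gamma+1$'' bound of part~(1) (formally $\delta=m$, if $m+\gamma-\delta=\gamma$\,---\,though one needs $\gamma>\delta$ strictly) and the curvature-improved bound (e.g.\ $\delta=0$ giving $\|V_+\|_{m+\gamma}^{m+\gamma}$). No curvature input beyond assumption~(4) with $r\in(0,(d-1)/2]$ is needed, and since there are no high energies on $\Z^d$ the ellipticity assumption~(3) of Assumption~\ref{assumt} plays no role, exactly as noted before Theorem~\ref{tracerestrictionlattice}.
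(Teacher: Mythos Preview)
Your plan for part~(1) is correct once you settle on $m\in[1,\gamma+1)$; this is exactly what the paper does (it refers back to the continuum proof, which likewise chose $m$ strictly below $\gamma+d/s$). The stream-of-consciousness detour through $m=\gamma+1$ should be deleted, but the final argument is fine.

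Part~(2) has a genuine gap in the high-$e$ piece. You keep the factor $e^{-m}$ from \eqref{eq:tracerestrictionlattice3} but replace $\|(V-e/2)_+\|_m^m$ by $\|(V-e/2)_+\|_{m'}^{m'}$ with $m'=m+\gamma-\delta>m$, citing ``nestedness''. On $\Z^d$ nestedness says $\|f\|_{m'}\leq\|f\|_m$, which is the wrong direction; there is no inequality $\sum_x f(x)^m\leq\sum_x f(x)^{m'}$ when $m'>m$. Even granting the substitution, your integral $\int_1^\infty e^{\gamma-1-m}(V(x)-e/2)_+^{m'}\,de$ evaluates (after scaling $e=2V(x)t$) to a constant times $V(x)^{\gamma-m+m'}=V(x)^{2\gamma-\delta}$ whenever $\gamma>m$, not $V(x)^{m+\gamma-\delta}$; so the claimed bound fails precisely in the regime $\gamma>m$ that the theorem is meant to cover.

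The paper's fix is different and is where $\delta$ actually enters: keep the $\ell^m$ norm but relax the prefactor via $e^{-m}\theta(e-1)\leq e^{-\delta}$ for all $e>0$ (valid since $0\leq\delta\leq m$). Then the full Beta integral
\[
\int_0^\infty e^{\gamma-1-\delta}\sum_{x}(V(x)-e/2)_+^{m}\,de
= c_{m,\gamma,\delta}\sum_x V(x)^{m+\gamma-\delta}
\]
converges at $e=0$ exactly because $\gamma>\delta$. Your treatment of the low-$e$ (logarithmic) term is correct and matches the paper.
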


For $\delta=m-1$, the bound in \eqref{eq:classicltlattice2} restores
$\|V_+\|_{\gamma+1}^{\gamma+1}$ in \eqref{eq:classicltlattice1}.

\begin{proof}
  By the variational principle we can assume $V=V_+\geq0$.
  The proof of \eqref{eq:classicltlattice1} is the same as that of
  \eqref{eq:classiclt} and we omit it.
  To prove \eqref{eq:classicltlattice2} we use
  \eqref{eq:tracerestrictionlattice3} in Theorem
  \ref{tracerestrictionlattice} and $m\geq q$ to bound
  \begin{align*}
    N_e(V) & \lesssim_S \log(2+\frac1e)^m\theta(1-e)\left(\sum_{x\in\Z^d}(V(x)-\frac{e}{2})_+^{m}\right)^{\frac{m}{q}} + e^{-m}\theta(e-1)\sum_{x\in\Z^d}(V(x)-\frac{e}{2})_+^{m}\\
    & \lesssim \log(2+1/e)^m\theta(1-e) \|V\|_q^m + e^{-\delta}\sum_{x\in\Z^d}(V(x)-\frac{e}{2})_+^{m}
  \end{align*}
  for any $0\leq\delta\leq m$. For $\gamma>\delta$ the second
  term on the right contributes with
  \begin{align*}
    \int_0^\infty de\ e^{\gamma-1-\delta} \sum_x (V(x)-e/2)_+^m
    \lesssim_{m,\delta,\gamma} \|V\|_{m+\gamma-\delta}^{m+\gamma-\delta}\,,
  \end{align*}
  whereas the first term contributes with
  \begin{align*}
    \int_0^1 de\ e^{\gamma-1} \log(2+1/e)^m \|V\|_q^m
    \lesssim_{m,q,\gamma} \|V\|_q^m
  \end{align*}
  to the left side of \eqref{eq:classicltlattice2}.
  This concludes the proof.
\end{proof}

\begin{remark}
  Bach, Lakaev, and Pedra \cite{Bachetal2018} proved CLR bounds in $d\geq3$
  when the symbol $T\in C^2(\T^d)$ is a Morse function, i.e., it satisfies
  $T(\xi)\sim|\xi-\xi_0|^2$ near a minimum $\xi_0\in\T^d$.
  This is needed \cite[p.~21]{Bachetal2018} to apply \cite[Theorem 3.2]{Frank2014}
  when computing $\int_{T^{-1}((0,E])}T(\xi)^{-1}\,d\xi$.
\end{remark}

\subsection{Sums of logarithms of eigenvalues}

\begin{theorem}
  \label{logltlattice}
  Let $H=T-V$ in $\ell^2(\Z^d)$ with $T$ satisfying
  (1), (2), and (4) in Assumption \ref{assumt}
  with $r\in(0,(d-1)/2]$.
  Let $q\in[1,r+1]$, $m=\sigma(q,r)$, and
  $V\in \ell^m\cap \ell^q(\Z^d)=\ell^q(\Z^d)$.
  Then for any
  $\gamma>m$ there is a constant $c_S$ (which also depends on $d,s,m,q,\gamma,r$)
  such that
  \begin{align}
    \label{eq:logltfinallattice}
    \sum_j \left(\frac{1}{|\log(\langle 1/e_j\rangle)|}\right)^\gamma
    \leq c_S (\|V_+\|_m^m + \|V_+\|_q^{m})
    \lesssim c_S \|V_+\|_q^{m}\,.
  \end{align}
\end{theorem}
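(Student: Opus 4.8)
The plan is to mimic the proof of Theorem~\ref{loglt} in the continuum, now using the lattice Birman--Schwinger bound \eqref{eq:tracerestrictionlattice3} from Theorem~\ref{tracerestrictionlattice} in place of \eqref{eq:tracerestriction2}. By the variational principle we may assume $V=V_+\geq0$. First I would record the elementary integral identity
\begin{align*}
  \frac{1}{\left(\log(\langle e^{-1}\rangle)\right)^\gamma}
  = \gamma\int_0^{e} \left(\log(\langle r^{-1}\rangle)\right)^{-\gamma-1}\left\langle r^{-1}\right\rangle^{-2}\frac{dr}{r^3}\,,
  \quad \gamma>0\,,
\end{align*}
exactly as in \eqref{eq:logltaux}, and then sum over the eigenvalues and use Fubini together with $\sum_j\theta(e_j-r)=N_r(V)$ to obtain
\begin{align*}
  \sum_j \left(\frac{1}{|\log(\langle 1/e_j\rangle)|}\right)^\gamma
  = \gamma\int_0^\infty \left(\log(\langle r^{-1}\rangle)\right)^{-\gamma-1}\left\langle r^{-1}\right\rangle^{-2}N_r(V)\,\frac{dr}{r^3}\,,
\end{align*}
precisely as in \eqref{eq:logltaux2}.

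Next I would insert the pointwise bound on $N_r(V)$. Combining \eqref{eq:boundnumber} with \eqref{eq:tracerestrictionlattice3} in Theorem~\ref{tracerestrictionlattice} gives
\begin{align*}
  N_r(V) \lesssim_S (\log(2+1/r))^m\|V\|_q^m\,\theta(1-r) + r^{-m}\|V\|_m^m\,\theta(r-1)\,,
\end{align*}
and since $m=\sigma(q,r)\geq q$ when $r\leq(d-1)/2$ (by item~(1) of Remark~\ref{remfranksabingen}) the nestedness of the $\ell^p$ spaces lets us replace $\|V\|_m$ by $\|V\|_q$ throughout, which is why $\ell^m\cap\ell^q=\ell^q$ and why the final bound collapses to $c_S\|V_+\|_q^m$. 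Splitting the $r$-integral at $r=1$: on $(0,1)$ the integrand is controlled by $\|V\|_q^m\,r^{-1}(\log(\langle r^{-1}\rangle))^{-\gamma-1}(\log(2+1/r))^m$ (using $\langle r^{-1}\rangle^{-2}\lesssim r^2$ there), which is integrable at $0$ precisely because $\gamma>m$ forces $-\gamma-1+m<-1$, and is trivially integrable near $r=1$; on $(1,\infty)$ the factor $\langle r^{-1}\rangle^{-2}$ is bounded, $N_r(V)\lesssim r^{-m}\|V\|_q^m$, and the extra $r^{-3}$ makes the tail converge with no condition on $\gamma$. Adding the two pieces yields $\sum_j(\log\langle 1/e_j\rangle)^{-\gamma}\lesssim_S\|V_+\|_q^m$, which is \eqref{eq:logltfinallattice}.

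The argument is essentially bookkeeping once Theorem~\ref{tracerestrictionlattice} is in hand, so there is no serious obstacle; the one point that needs care is the low-energy convergence of the $r$-integral near $r=0$, where the logarithmic gain $(\log\langle r^{-1}\rangle)^{-\gamma-1}$ must beat the algebraic divergence coming from $r^{-3}\langle r^{-1}\rangle^{-2}\sim r^{-1}$ after accounting for the $(\log(2+1/r))^m$ factor supplied by the Birman--Schwinger bound. This is exactly where the hypothesis $\gamma>m$ enters: it guarantees $\int_0^1 r^{-1}(\log(2+1/r))^{m-\gamma-1}\,dr<\infty$. (As noted in item~(3) of Remarks~\ref{logltremark}, whether $\gamma>m$ is truly necessary is left open.) Everything else — the absence of high-energy ($e^{d/s}$) terms, the fact that any $m\geq1$ is admissible, and the disappearance of the separate $\|V\|_m$ norm — is inherited directly from the lattice version of Theorem~\ref{tracerestriction} and requires no new input.
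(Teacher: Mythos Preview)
Your proof is correct and follows essentially the same approach as the paper: reduce to $V=V_+$, use the integral representation \eqref{eq:logltaux}--\eqref{eq:logltaux2}, plug in the lattice Birman--Schwinger bound \eqref{eq:tracerestrictionlattice3}, and check the convergence of the resulting $r$-integral (the paper simply writes the uniform majorant $N_r(V)\lesssim\|V\|_m^m+(\log(2+1/r))^m\|V\|_q^m$ and refers back to the proof of Theorem~\ref{loglt}). Your more explicit splitting at $r=1$ and identification of $\gamma>m$ as the threshold for the low-$r$ integrability is exactly the content of that reference.
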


\begin{proof}
  Without loss of generality let $V=V_+$.
  Using \eqref{eq:noose}, the representation \eqref{eq:logltaux}, and
  \eqref{eq:tracerestrictionlattice3} in Theorem
  \ref{tracerestrictionlattice} for $\gamma>0$ and $m\geq1$, i.e.,
  \begin{align}
    N_{r}(V) \lesssim \|V\|_m^m + \|V\|_q^m\cdot \left(\log\left(2+\frac{1}{r}\right)\right)^{m}\,,
  \end{align}
  lets us proceed as in the proof of Theorem \ref{loglt}.
\end{proof}

\begin{remark}
  We make an observation similar to that after Theorem \ref{loglt}.
  The right side of \eqref{eq:logltfinallattice} is bounded by a
  constant times $\|V_+\|_q^{m}$ which is consistent with the right
  side of \eqref{eq:logltprelim} when $r=(d-1)/2$, $m=d+1$, and $q=(d+1)/2$.
  However, we need to restrict ourselves again to $\gamma>d+1$ which
  makes \eqref{eq:logltfinallattice} weaker compared to
  \eqref{eq:logltprelim}. A similar question arises whether
  \eqref{eq:logltfinallattice} can hold for $\gamma=m$.
\end{remark}

\subsection{A CLR bound for powers of the BCS operator in $\ell^2(\Z^d)$}

Let $d\geq1$.
We generalize Theorem \ref{clrbcs} to $|\Delta+\mu|^{1/s}$ with $s>1$ and
$\mu\in[-1,1]\setminus Z$ on $\ell^2(\Z^d)$.
To that end we use that Cwikel's estimate continues to hold in $\ell^2(\Z^d)$.
This is a consequence of an abstract theorem by Birman, Karadzhov, and
Solomyak \cite[Theorem 4.8]{Birmanetal1991}, which also includes an extension
of the Kato--Seiler--Simon inequality.
Recall that the discrete unitary Fourier transform
$\F:\ell^2(\Z^d)\to L^2(\T^d)$ obeys $\|\F\|_{\ell^1(\Z^d)\to L^\infty(\T^d)}\leq1$.
Adapted to our setting, their result reads as follows.

\begin{theorem}[{\cite[Theorem 4.8]{Birmanetal1991}}]
  \label{frankcwikel}
  Let $q>2$, $f\in \ell^q(\Z^d)$, and $g\in L^{q,\infty}(\T^d)$. Then
  \begin{align}
    \label{eq:frankcwikel}
    \|f\F^* g\|_{\cs^{q,\infty}(L^2(\T^d)\to\ell^2(\Z^d))}
    \lesssim_q \|f\|_{\ell^q(\Z^d)} \|g\|_{L^{q,\infty}(\T^d)}\,.
  \end{align}
\end{theorem}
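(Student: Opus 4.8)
The plan is to obtain \eqref{eq:frankcwikel} as a direct application of the abstract Cwikel-type estimate of Birman, Karadzhov, and Solomyak in \cite[Theorem~4.8]{Birmanetal1991}, whose hypotheses need only be verified in the present discrete Fourier setting. That theorem applies to operators of the form $f\,U^{*}g$, where $U$ is a unitary map between the $L^{2}$-spaces of two $\sigma$-finite measure spaces which is, in addition, bounded from $L^{1}$ of the source space into $L^{\infty}$ of the target space. Here the source is $\ell^{2}(\Z^{d})$ (counting measure), the target is $L^{2}(\T^{d})$ (Lebesgue measure on the Brillouin zone $[-1/2,1/2)^{d}$, which has total mass one, so that $\F$ is unitary by Parseval), and $U=\F$. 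Both measure spaces are $\sigma$-finite, and, as recalled immediately before the statement, $\|\F\|_{\ell^{1}(\Z^{d})\to L^{\infty}(\T^{d})}\le 1$, since the Fourier series of an $\ell^{1}$-sequence is continuous with sup-norm at most its $\ell^{1}$-norm. With these ingredients \cite[Theorem~4.8]{Birmanetal1991} delivers precisely $\|f\F^{*}g\|_{\cs^{q,\infty}}\lesssim_{q}\|f\|_{\ell^{q}}\|g\|_{L^{q,\infty}}$ for every $q>2$.

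For completeness I would also indicate the self-contained route, which is the classical two-endpoint argument underlying \cite{Birmanetal1991}. At the Hilbert--Schmidt endpoint, the operator $f\F^{*}g\colon L^{2}(\T^{d})\to\ell^{2}(\Z^{d})$ has integral kernel $K(n,\xi)=f(n)\,g(\xi)\,\me{2\pi in\cdot\xi}$, whose modulus is $|f(n)|\,|g(\xi)|$; hence $\|f\F^{*}g\|_{\cs^{2}}^{2}=\sum_{n}|f(n)|^{2}\int_{\T^{d}}|g(\xi)|^{2}\,d\xi=\|f\|_{\ell^{2}}^{2}\|g\|_{L^{2}}^{2}$. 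At the bounded endpoint, multiplication by $f$ on $\ell^{2}$ has norm $\|f\|_{\ell^{\infty}}$, $\F^{*}$ is unitary, and multiplication by $g$ on $L^{2}$ has norm $\|g\|_{L^{\infty}}$, so $\|f\F^{*}g\|\le\|f\|_{\ell^{\infty}}\|g\|_{L^{\infty}}$. One then decomposes $g$ dyadically according to its level sets, estimates the singular-value counting function of each piece by the $\cs^{2}$-bound for the small levels and by the operator-norm bound for the large levels, and optimizes the splitting threshold as a function of $\lambda$, exactly as in Cwikel's original proof; the condition $q>2$ together with $\ell^{q}\hookrightarrow\ell^{q,\infty}$ makes the resulting geometric series converge and yields the weak-Schatten bound. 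Equivalently, one may invoke bilinear real interpolation between $(\ell^{2},L^{2})\to\cs^{2}$ and $(\ell^{\infty},L^{\infty})\to\cs^{\infty}$, using $(\ell^{2},\ell^{\infty})_{\theta,q}=\ell^{q}$, $(L^{2},L^{\infty})_{\theta,\infty}=L^{q,\infty}$, and $(\cs^{2},\cs^{\infty})_{\theta,\infty}=\cs^{q,\infty}$ with $1/q=(1-\theta)/2$.

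The main obstacle is not analytic but a matter of bookkeeping: one must invoke \cite[Theorem~4.8]{Birmanetal1991} in the asymmetric form that allows the \emph{weak} space $L^{q,\infty}$ on the multiplier $g$ while keeping the \emph{strong} space $\ell^{q}$ on $f$ (this asymmetry is exactly the one in \eqref{eq:cwikel}, and it is what the applications require, e.g.\ for symbols $|P(\xi)-\mu|^{-1/s}\in L^{q,\infty}(\T^{d})$), and one must reconcile the normalizations of $\cs^{q,\infty}$ and of the measures on $\Z^{d}$ and $\T^{d}$ with ours. No compactness or curvature of level sets enters here; the estimate rests solely on the $L^{1}\to L^{\infty}$ and $L^{2}\to L^{2}$ mapping properties of $\F$.
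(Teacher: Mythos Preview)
Your proposal is correct and matches the paper's approach: the paper does not prove this statement at all but simply cites \cite[Theorem~4.8]{Birmanetal1991} as a black box, having recorded just before the statement the one ingredient needed to feed into that abstract result, namely $\|\F\|_{\ell^{1}(\Z^{d})\to L^{\infty}(\T^{d})}\le 1$. Your first paragraph reproduces exactly this, and your additional self-contained sketch of the two-endpoint Cwikel argument goes beyond what the paper provides.
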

In combination with \eqref{eq:boundnumber} (as in the proof of
Theorem \ref{clrbcs}), \eqref{eq:frankcwikel} yields

\begin{theorem}
  \label{fracbcsclrlattice}
  Let $d\geq1$, $\mu\in[-1,1]\setminus Z$, $\sigma>1$, $p\in(1,\sigma]$,
  and $T_\mu(\xi)$ be defined as in \eqref{eq:deflatticebcs} with the ordinary
  Laplace operator.
  Then $T_\mu(\xi)^{-1/(2\sigma)}\in L^{2p,\infty}(\T^d)$ (not necessarily uniformly
  in $\mu,\sigma,p,d$). Moreover, the number of negative eigenvalues of
  $(T_\mu)^{1/\sigma}-V$ is bounded by a constant (possibly depending on
  $\mu,\sigma,p,d$) times $\|V_+\|_{\ell^{p}(\Z^d)}^{p}$.
\end{theorem}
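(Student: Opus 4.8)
The plan is to mimic the proof of Theorem~\ref{clrbcs} in the discrete setting, using Theorem~\ref{frankcwikel} in place of Cwikel's estimate \eqref{eq:cwikel}. First I would, by the variational principle, assume $V=V_+\geq0$. The Birman--Schwinger principle \eqref{eq:boundnumber} gives $N_0(V)=n(1,BS(0))\le\|V^{1/2}(T_\mu)^{1/s}{}^{-1}V^{1/2}\|_{\cs^{p,\infty}}^p$, and splitting the Schatten norm via H\"older (as in \eqref{eq:tomassteinschatten}) bounds this by $\|(T_\mu)^{-1/(2s)}V^{1/2}\|_{\cs^{2p,\infty}(L^2(\T^d)\to\ell^2(\Z^d))}^{2p}$. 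Writing $V^{1/2}(T_\mu)^{-1/(2s)}$ as $f\F^*g$ with $f=V^{1/2}\in\ell^{2p}(\Z^d)$ and $g(\xi)=T_\mu(\xi)^{-1/(2s)}$, Theorem~\ref{frankcwikel} with exponent $2p>2$ yields
\begin{align*}
  N_0(V)\lesssim_{p}\|V^{1/2}\|_{\ell^{2p}(\Z^d)}^{2p}\,\|T_\mu^{-1/(2s)}\|_{L^{2p,\infty}(\T^d)}^{2p}
  =\|V_+\|_{\ell^{p}(\Z^d)}^{p}\,\|T_\mu^{-1/(2s)}\|_{L^{2p,\infty}(\T^d)}^{2p}\,,
\end{align*}
so it remains only to show the weak-$L^{2p}$ norm of $T_\mu^{-1/(2s)}$ on $\T^d$ is finite for $p\in(1,s]$.

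The finiteness of $\|T_\mu^{-1/(2s)}\|_{L^{2p,\infty}(\T^d)}$ is where Assumptions (1) and (2) enter, and I expect this to be the only genuinely nontrivial point. By definition of the weak norm one must estimate $|\{\xi\in\T^d:\,T_\mu(\xi)<\alpha^{2s}\}|$ for small $\alpha$, i.e.\ the measure of a tube of width $\sim\alpha^{2s}$ around the level set $S=\{P=\mu\}$. Since $P-\mu$ is smooth with $|\nabla(P-\mu)|\ge c_P>0$ near $S$ (Assumption (2), which holds because $\mu\notin Z$ is not a critical value), the coarea formula shows this tubular neighborhood has measure $\lesssim\alpha^{2s}$; away from $S$ the symbol $T_\mu$ is bounded below, so that region contributes nothing for small $\alpha$. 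Hence $\sup_{\alpha>0}\alpha^{2p}|\{T_\mu^{-1/(2s)}>1/\alpha\}|\lesssim\sup_{\alpha>0}\alpha^{2p-2s}<\infty$ precisely when $p\le s$, with the constant depending on $c_P$, $\mu$, $s$, $p$, $d$ (and the geometry of $S$). The lower restriction $p>1$ is forced by Theorem~\ref{frankcwikel}, which requires the Schatten exponent $2p>2$.

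The main obstacle, then, is purely the level-set volume bound, and it is mild: it is a one-variable coarea estimate using only the non-degeneracy $|\nabla(P-\mu)|\ge c_P$, with no curvature hypothesis required. Contrasting with Theorem~\ref{clrbcs}, in the continuum the analogous computation produced the factor $\max\{\alpha^{2s},\alpha^{ds}\}$, and finiteness of $\sup_\alpha\alpha^{-2p}\max\{\alpha^{2s},\alpha^{ds}\}$ forced both $p=s$ \emph{and} $d=2$; on the lattice the high-frequency region is simply absent (the Brillouin zone is compact), so the $\alpha^{ds}$ term never appears, the rigid constraint disappears, and one only needs $p\le s$ in any dimension $d\ge1$. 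This is exactly the phenomenon flagged in Remark~\ref{rem:clrbcs}.
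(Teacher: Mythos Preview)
Your approach is correct and essentially the same as the paper's: reduce via Birman--Schwinger and \eqref{eq:boundnumber} to Theorem~\ref{frankcwikel}, then verify $T_\mu^{-1/(2s)}\in L^{2p,\infty}(\T^d)$ by bounding the measure of $\{T_\mu<\alpha^{2s}\}$. The only variation is in that last step: you use the coarea formula and the non-criticality $|\nabla P|\ge c_P>0$ near $\{P=\mu\}$, while the paper instead inserts the elementary bounds $1-x^2/2\le\cos x\le 1-x^2/(2\pi)$ to trap $\xi$ in a thin spherical annulus. Your route is slightly more general (it works for any smooth symbol at a regular value), the paper's is more hands-on and specific to the ordinary lattice Laplacian; both yield the same $\lesssim\alpha^{2s}$ bound.

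One small slip to fix: since the threshold in the weak norm is $1/\alpha$, the prefactor should be $(1/\alpha)^{2p}=\alpha^{-2p}$, not $\alpha^{2p}$. This gives $\alpha^{-2p}\cdot\alpha^{2s}=\alpha^{2s-2p}$, which is indeed bounded as $\alpha\to 0^+$ exactly when $p\le s$, consistent with your (correct) conclusion.
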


\begin{proof}
  The proof is analogous to that of Theorem \ref{clrbcs}.
  The bound for the number of negative eigenvalues follows from the
  variational principle, \eqref{eq:boundnumber}, and \eqref{eq:frankcwikel}
  (with $f(x)=|V(x)|^{1/2}$ and $g(\xi)=(T_\mu(\xi))^{-1/(2\sigma)}$ for $x\in\Z^d$
  and $\xi\in\T^d$).
  Thus, we are left with showing $T_\mu(\xi)^{-1/(2\sigma)}\in L^{2p,\infty}(\T^d)$
  with $p\leq \sigma$. Since $|\T^d|=1$, it suffices to check
  \begin{align}
    \label{eq:discreteweaknorm}
    |\{\xi\in\T^d:\, T_\mu(\xi)\leq\beta^{2\sigma}\}|
    \lesssim_{\mu,d,\sigma,p} \beta^{2p} \quad \text{for}\ \beta\leq1\,.
  \end{align}
  Since $\mu$ is a given, fixed parameter, we may even suppose
  $\beta^{2\sigma}<1-\mu$ in the following. Then $T_\mu(\xi)\leq\beta^{2\sigma}$
  is equivalent to the bounds
  \begin{align}
    \label{eq:fracbcsclrlatticeaux}
    -\beta^{2\sigma} \leq d^{-1}\sum_{j=1}^d\cos(2\pi\xi_j) - \mu \leq \beta^{2\sigma}\,,
    \quad \xi_j\in(-\frac12,\frac12)\,.
  \end{align}
  Since $1-x^2/2\leq\cos x\leq1-x^2/(2\pi)$ for all $x\in(-\pi,\pi)$,
  \eqref{eq:fracbcsclrlatticeaux} implies
  \begin{align*}
    1-\mu-\beta^{2\sigma} \leq \frac{2\pi^2}{d}|\xi|^2 \leq 1-\mu+\beta^{2\sigma}\,.
  \end{align*}  
  Thus, the left side of \eqref{eq:discreteweaknorm} is bounded from
  above by $\beta^{2\sigma} \leq \beta^{2p}$
  since $p\leq \sigma$ and $\beta<1$.
  This concludes the proof.
\end{proof}

\appendix
\section{Alternative proof of Theorem \ref{classiclt}}
We now give an alternative proof of Theorem \ref{classiclt} (1) for $\gamma>0$
using an observation made by Frank \cite[p.~794]{Frank2009}, together with Theorem \ref{clrbcs}.

\begin{theorem}
  \label{classicltrumin2}
  Suppose $T(\xi)$ satisfies (1)-(3) in Assumption \ref{assumt}.
  If $\gamma>0$ and $V\in L^{\gamma+1}\cap L^{\gamma+d/s}(\R^d)$
  then there exists a constant $c_S>0$ (which also depends on $d,s,\gamma$)
  such that
  \begin{align}
    \label{eq:classicltrumin}
    \tr_{L^2(\R^d)}(T(-i\nabla)-V)_-^\gamma
    \leq c_S \int_{\R^d} (V_+(x)^{\gamma+1} + V_+(x)^{\gamma+d/s})\,dx\,.
  \end{align}
\end{theorem}

\begin{proof}
  Without loss of generality we assume $V\geq0$.
  For $E>0$ and $\sigma>1$ we record
  \begin{align}
    T(-i\nabla) + E \geq c_\sigma\cdot T(-i\nabla)^{1/\sigma} \cdot E^{1/\sigma'} 
  \end{align}
  for $\sigma'=(1-1/\sigma)^{-1}$ and some $c_\sigma>0$.
  This observation and Theorem \ref{clrbcs} imply that the number of
  eigenvalues $N(2E,T-V)$ of $T-V$ below $-2E<0$ is bounded by
  \begin{align}
    \begin{split}
      N(2E,T-V)
      & = N(0,T+E-(V-E))
      \leq N(0,c_\sigma E^{1/\sigma'}T^{1/\sigma}-(V-E))\\
      & = n(1,|V-E|^{\frac12}(c_\sigma E^{1/\sigma'}T^{1/\sigma})^{-1}(V-E)^{\frac12})\\
      & = N(0,c_\sigma T^{1/\sigma}-E^{-1/\sigma'}(V-E))\\
      & \lesssim_\sigma E^{-\frac{\sigma}{\sigma'}}\|(V-E)_+\|_{L^{\sigma}(\R^d)}^\sigma + E^{-\frac{\sigma d/s}{\sigma'}} \|(V-E)_+\|_{L^{\sigma d/s}(\R^d)}^{\sigma d/s}\,.
    \end{split}
  \end{align}
  Thus, we obtain for any $\gamma>d\sigma/(s\sigma')$,
  \begin{align}
    \begin{split}
      \tr(T-V)_-^\gamma
      & = \int_0^\infty dE\, E^{\gamma-1}\cdot N(E,T-V)\\
      & \lesssim \int_0^\infty dE\, \left[E^{\gamma-1-\frac{\sigma}{\sigma'}}\int_{\R^d}dx\, (V(x)-\frac{E}{2})_+^\sigma + E^{\gamma-1-\frac{d\sigma}{s\sigma'}}\int_{\R^d}dx\, (V(x)-\frac{E}{2})_+^{\frac{d\sigma}{s}}\right]\\
      & \sim \int_{\R^d}(V(x)^{\gamma+1} + V(x)^{\gamma+d/s})\,dx\,.
    \end{split}
  \end{align}
  This concludes the proof.
\end{proof}

\begin{remarks}
  \begin{enumerate}
  \item We do not know whether the CLR bounds in Theorem \ref{clrbcs} and an
    argument similar to that in the proof of Theorem \ref{classicltrumin2} can
    be used to prove estimates for sums of logarithms of eigenvalues as in
    Theorem \ref{loglt}.

  \item Theorem \ref{classicltrumin2} for $\gamma=1$ can be proved using
    Rumin's method, see also \cite[Proposition~4]{Franketal2021} or
    \cite[Section~6]{Frank2021}. The case $\gamma>1$ then follows from this
    together with the argument of Aizenman and Lieb \cite{AizenmanLieb1978} and
    the observation
    \begin{align*}
      \int_{\R^d}\left(T(\xi)-V(x)\right)_-^\gamma\,d\xi
      & = \int_0^{V(x)}dt\, (V(x)-t)^\gamma \int_{S_t}\frac{d\Sigma_{S_t}(\xi)}{|\nabla P(\xi)|}\\
      & \sim \int_0^{V(x)}dt\, (V(x)-t)^\gamma \cdot (1+t)^{d/s-1}
        \sim V(x)^{\gamma+1} + V(x)^{\gamma+d/s}\,.
    \end{align*}
  \end{enumerate}
\end{remarks}

\section*{Acknowledgments}
We are grateful to Volker Bach for valuable discussions
and to Kouichi Taira for providing helpful comments.
Special thanks go to Rupert Frank for providing critical remarks
on Theorems \ref{classiclt} and \ref{clrbcs},
and for pointing out that the methods of Rumin and \cite{Frank2009,Frank2014}
provide an alternative proof of Theorem \ref{classiclt} (1).


\def\cprime{$'$}

\end{document}